\documentclass[a4paper,11pt]{article}

\usepackage[utf8]{inputenc}
\usepackage[english]{babel}
\usepackage{amssymb,amsmath,amsthm}

\newtheorem{theorem}{Theorem}
\newtheorem{corollary}{Corollary}
\usepackage[margin=0.95in]{geometry}

\usepackage{amsthm,enumerate}
\usepackage{graphicx}
\usepackage{color,hyperref}
\usepackage{tabularx}
\usepackage{amsmath}
\usepackage{amssymb,amsfonts,textcomp}
\usepackage{latexsym}
\usepackage{graphicx} 
\usepackage{multirow}
\usepackage{microtype}
\usepackage{rotating}
\usepackage{url}
\usepackage{color}
\usepackage{comment}
\usepackage{booktabs}
\usepackage{bm}
\usepackage[]{units}
\usepackage[noend, ruled, noline]{algorithm2e}
\usepackage{tikz}
\usepackage{tcolorbox,complexity,multirow}
\usepackage{float}

\usepackage{algorithm2e}
\SetKw{Continue}{continue}

\newtheorem{definition}{Definition}[section]
\newtheorem{prop}[definition]{Proposition}

\newtheorem{lemma}[definition]{Lemma}
\newtheorem{claim}[definition]{Claim}

\newtheorem{observation}[definition]{Observation}


\newcommand{\vote}{\mathsf{vote}}

\newcommand{\runtime}{$\mathcal{O}(|E|)$}
\newcommand{\runtimeold}{$\mathcal{O}(\sqrt{|V|\alpha(|V|,|E|)} |E| \log ^\frac{3}{2}|V|)$}

\usetikzlibrary{arrows,decorations.markings,patterns,calc,positioning,backgrounds}
\tikzstyle{vertex} = [circle, draw=black, fill=black, scale= 0.5]
\tikzstyle{edgelabel} = [circle, fill=white, scale= 0.9]
\tikzstyle{arrow} = [line width=0.8mm,-implies,double, double distance=0.8mm]
\tikzstyle{dashedpointedline} = [line width=0.2mm,dashed,dash pattern=on 2mm off 1mm,
decoration={markings,
	mark=at position 1 with {\arrow[line width=0.2mm, scale= 1.8]{>}}
},
postaction={decorate}
] 
\tikzstyle{pointedline} = [line width=0.3mm,
decoration={markings,
	mark=at position 1 with {\arrow[line width=0.2mm, scale= 2]{>}}
},
postaction={decorate}
] 

\makeatletter
\newcommand\newtag[2]{#1\def\@currentlabel{#1}\label{#2}}
\makeatother

\bibliographystyle{abbrv}

\title{Testing popularity in linear time via maximum matching}

\author{Erika Bérczi-Kovács\\ ELTE Eötvös Loránd University, Budapest, Hungary
\\HUN-REN–ELTE Egerváry Research Group on Combinatorial Optimization 
\and
Kata Kosztolányi\\
 ELTE Eötvös Loránd University, Budapest, Hungary}

\begin{document}

\maketitle

\begin{abstract}
Popularity is an approach in mechanism design to find fair structures in a graph, based on the votes of the nodes.
Popular matchings are the relaxation of stable matchings: given a graph $G=(V,E)$ with strict preferences on the neighbors of the nodes, a matching $M$ is popular if there is no other matching $M'$ such that the number of nodes preferring $M'$ is more than those preferring $M$. This paper considers the popularity testing problem, when the task is to decide whether a given matching is popular or not.
Previous algorithms applied reductions to maximum weight matchings.
We give a new algorithm for testing popularity by reducing the problem to maximum matching testing, thus attaining a linear running time \runtime.

Linear programming-based characterization of popularity is often applied for proving the popularity of a certain matching. As a consequence of our algorithm we derive a more structured dual witness than previous ones. Based on this result we give a combinatorial characterization of fractional popular matchings, which is a special class of popular matchings.
\end{abstract}

\section{Introduction}
The notion of popularity was introduced by G\"ardenfors \cite{Gar75}. In a popular roommates problem, a graph $G = (V, E)$ is given with strict preferences ($\succ_v$) over the neighbors of~$v$ for each node $v\in V$, similarly to the stable roommates problem \cite{Irv85}. 
A node prefers a matching $M$ over matching $M'$ if it prefers its partner in $M$ compared to the one in $M'$  (a node prefers every neighbor over being unmatched).
We say that a matching $M$ is \textbf{more popular} than $M'$ if the number of nodes preferring $M$ to $M'$ is larger than the number of those preferring $M'$ to~$M$. A matching $M$ is called \textbf{popular} if there is no matching~$M'$ that is more popular than~$M$.

Since popularity is closely related to stability, we also define stable matchings: given a matching $M$, an edge  $uv \in E\setminus M$ is called a \textbf{blocking edge} if both $v \succ_u M(u)$ and $u \succ_v M(v)$, that is, both $u$ and $v$ prefer each other over their partners in $M$. A matching $M$ is \textbf{stable} if there is no blocking edge in $G$. 

As for the relation of stability and popularity, G\"ardenfors \cite{Gar75} showed that every stable matching is also popular in a bipartite graph. Later Chung \cite{Chu00} proved the same for the general case. Moreover, a stable matching is also a minimum size popular matching, shown by Huang and Kavitha \cite{BIM10}. So popularity can be regarded as a relaxation of stability of a matching, where we relax the constraint of local stability to a global one. 
The motivation behind considering a relaxation of stability is twofold: first, there are instances of the stable roommates problem where no stable matching exists, but a popular one can be given \cite{BIM10}. Second, the size of a popular matching can be greater than that of a stable one, thus including more participants in a matching which is fair from a global pont of view.
The maximum-size popular matching problem can be solved in polynomial time in bipartite graphs \cite{HK13,Kav14}. 
Although deciding the existence of a stable matching can be done in polynomial time (Irving, \cite{Irv85}), deciding the existence of a popular matching is NP-complete, shown by Faenza et al. \cite{FKPZ19} and Gupta et al. \cite{GMSZ21}.
For further reading on popularity see \cite{Cse17}, \cite{Man13}.

 In this paper we investigate the popularity testing problem, when the task is to decide whether a given matching is popular or not. 
To the best of our knowledge, \runtimeold$ $ (where $\alpha$ is the inverse Ackermann’s function) is the running time of the current fastest algorithm for testing popularity, given by Biró et al. \cite{BIM10}, whereas in bipartite graphs (i.e. popular marriage problem)  testing popularity can be done in \runtime \; (Huang and Kavitha \cite{HK13}).
It is a natural question to ask whether a linear time algorithm can be given in the non-bipartite case as well. 
We show that linear  running time is achievable in that case (Theorem \ref{thm:G*}).

We also consider the linear programming-based characterization of popularity, which can be applied in several context to show the popularity of a matching, see for example \cite{KMN09}, \cite{DBLP:conf/icalp/Kavitha16} or \cite{HK13}. It is an important aspect in such applications to what extent the values of a dual witness can be restricted. We investigate this question also, and give a more structured dual witness than previous ones. Finally we show an application of our result for fractional popular matchings \cite{Kav19}, which are a special class of popular matchings. 

\subsection*{Our contribution}

\begin{enumerate}
    \item We give a new algorithm for testing popularity. Previous approaches apply maximum weight matching for the problem, whereas we reduce the problem to testing the maximality of a matching. Thus the running time of our algorithm is \runtime, which is the best one may expect (Theorem \ref{thm:G*}). 
    \item Also, as a consequence of our approach an LP dual witness can be derived with values $\{-1,0,+1\}$ on nodes and only $0$ or $2$ on odd-sets, which is more restrictive than previous results (Theorem \ref{thm:0-2}).
    \item Applying the previous dual witness, we give a combinatorial characterization for a matching to be popular in the fractional sense (Theorem \ref{thm:char-trula-popular}). This leads to a combinatorial algorithm for testing this property.
\end{enumerate}

The rest of the paper is organized as follows: in Section \ref{sec:previous} we summarize previous results and techniques on popularity testing. In Section \ref{sec:reduction} we prove our main result, reducing the popularity testing problem to a maximum matching testing problem. Finally in Section \ref{sec: polyhedra} we prove a theorem on a structured dual witness and show an application for fractional popular matchings \ref{subsec:frac}.

\section{Previous results}\label{sec:previous}

In this section we summarize previous results on characterizing and testing popularity.
There are two characterizations for the popularity of a matching, one applying a reduction to maximum weight perfect matching and one based on excluding certain alternating structures. Previous popularity testing algorithms rely on the former one.
In Subsection~\ref{subsec:mwchar} we present the maximum weight matching approach and related testing algorithms (\ref{subsubsec:previous algos}) and previous results on dual witness (\ref{subsubsec: previous witness}). In Subsection~\ref{subsec:alterneting char} the alternating structure based characterization is described, which will be the starting point of our algorithm.

\subsubsection*{Some notations}

For the rest of the paper, let a popular roommates instance be given on graph $G$ and let $M$ be a matching in~$G$.

For a subset of edges $F$, let $V(F)$ denote the set of those nodes that have at least one incident edge from $F$. For example, $V(M)$ is the set of nodes matched by $M$.
Let $M(v)$ denote the pair of $v$ in $M$.
For a subset of nodes $X$, let $G[X]$ denote the subgraph of $G$ spanned by $X$.

Let $x$ be a vector $x\in \mathbb{R}^E$. For a node $u\in V$ let $d^G_x(u):=\sum_{uv \in E} x_{uv}$, and for a subset of nodes $Z\subseteq V$ let $i^G_x(Z)$ denote the sum of $x_e$ values over edges spanned by $Z$, that is, $i^G_x(Z):=\sum_{uv \in G[Z]} x_{uv}$. 

For a graph $G=(V,E)$, let $\mathbb{B}(G)$ denote the set of odd subsets of the nodes.

\subsection{Maximum weight perfect matching-based characterization of popularity}\label{subsec:mwchar}

Kavitha et al. observed that for popular marriages the testing problem can be reduced to maximum weight perfect matchings \cite{KMN09}. Biró et al.\cite{BIM10} gave an analogous reduction for the roommates problem by proving that testing the popularity of a matching $M$ can be reduced to testing whether the maximum weight of a perfect matching is $0$ in a proper weighted auxiliary graph. Here we describe a similar reduction given by Kavitha \cite{Kav19}.

Let $\widetilde{G}$ denote the graph derived from $G$ by adding loops to every node, and for a matching $M$ let $\widetilde{M}$ denote the perfect matching in $\widetilde{G}$ derived from $M$ by adding loops for all unmatched nodes. 
For a node $u\in V$ and its neighbors $v,w$ in $\widetilde{G}$ let $\vote_u(v,w)$ denote the preference of $u$ over $v$ and $w$: $$\vote_u(v,w):=\left\{\begin{array}{rl}
		+1 & \textrm{if $v\succ_u w$}\\
		-1 & \textrm{if $w\succ_u v$}\\
		0  & \textrm{if $v=w$} \end{array}\right.$$
For example, $\vote_u(v,u)=1$ for every neighbor $v$ of $u$. 

Now we define edge weights $w_M$ on $\widetilde{E}$ for matching $M$. The weight of an edge $uv\in E$ is $w_M(uv):=\vote_u(v,\widetilde{M}(u))+\vote_v(u,\widetilde{M}(v))$, where loops have themselves as pairs. 
So the weight of an edge can be $-2, 0$ or $2$, and the ones with value $2$ are exactly the blocking edges.
For a loop $uu$ we define $w_M(uu):=\vote_u(u,\widetilde{M})$, that is, $$w_M(uu)=\left\{\begin{array}{rl}
		0 & \textrm{if $u \notin V(M)$}\\
		-1 & \textrm{if $u \in V(M)$}\end{array}\right.$$\label{wdef}
Note that $\widetilde{M}$ is always a $0$-valued perfect matching in $\widetilde{G}$.

We can reduce popularity to maximum weight perfect matching the following way.
\begin{theorem}[Kavitha, \cite{Kav19}]\label{thm:char-popular-mwpm}
Matching $M$ is popular in $G$ if and only if the maximum $w_M$-weight of a perfect matching in $\widetilde{G}$ is $0$.
\end{theorem}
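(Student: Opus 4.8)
The plan is to exhibit a weight-preserving bijection between the perfect matchings of $\widetilde{G}$ and the matchings of $G$, under which the $w_M$-weight of a perfect matching equals the ``popularity margin'' of the corresponding matching against $M$. First I would record the bijection: given a perfect matching $N$ of $\widetilde{G}$, deleting its loops yields a matching $M'$ of $G$; conversely $\widetilde{M'}$ is obtained from $M'$ by adding a loop at each node of $V\setminus V(M')$. Since every node of $\widetilde{G}$ is covered exactly once by $N$, the nodes missed by $M'$ are precisely those carrying a loop in $N$, so $N=\widetilde{M'}$; hence $M'\mapsto\widetilde{M'}$ is a bijection from the matchings of $G$ onto the perfect matchings of $\widetilde{G}$.

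The key step is the weight computation. For a fixed matching $M'$ of $G$ I would group the edges of $\widetilde{M'}$ around the nodes they cover and show
\[
w_M\bigl(\widetilde{M'}\bigr)=\sum_{u\in V}\vote_u\bigl(\widetilde{M'}(u),\widetilde{M}(u)\bigr).
\]
A non-loop edge $uv\in M'$ contributes $w_M(uv)=\vote_u(v,\widetilde{M}(u))+\vote_v(u,\widetilde{M}(v))$, i.e. exactly the $u$-term and the $v$-term of the right-hand side; a loop $uu\in\widetilde{M'}$ (so $u\notin V(M')$) contributes $w_M(uu)=\vote_u(u,\widetilde{M})$, which is $0$ if $u\notin V(M)$ and $-1$ if $u\in V(M)$, and this coincides with $\vote_u(u,\widetilde{M}(u))$ under the convention $\widetilde{M}(u)=u$. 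Summing over the edges of $\widetilde{M'}$ and using that $\widetilde{M'}$ covers each node exactly once yields the displayed identity.

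Now, by definition of $\vote$, the term for $u$ is $+1$ if $u$ strictly prefers its $\widetilde{M'}$-partner to its $\widetilde{M}$-partner (reading an absent partner as $u$ itself, which $u$ ranks below every neighbor), $-1$ if it strictly prefers the $M$-side, and $0$ if the two partners coincide; hence
\[
w_M\bigl(\widetilde{M'}\bigr)=\#\{u:\ u\text{ prefers }M'\text{ to }M\}-\#\{u:\ u\text{ prefers }M\text{ to }M'\}.
\]
Taking $M'=M$ gives $w_M(\widetilde{M})=0$, so the maximum $w_M$-weight of a perfect matching in $\widetilde{G}$ is always at least $0$. By the bijection of the first paragraph, this maximum is at most $0$ if and only if $w_M(\widetilde{M'})\le 0$ for every matching $M'$ of $G$, which by the identity above says exactly that no matching is more popular than $M$, i.e. that $M$ is popular. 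Combining, the maximum equals $0$ precisely when $M$ is popular.

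The only delicate point is the weight identity: one must check that the loop conventions ($\widetilde{M}(u)=u$ for $u\notin V(M)$, together with the stated value of $w_M(uu)$) are consistent with the intended reading ``an unmatched node votes as if paired with itself and ranks itself below every neighbor'', so that the per-node contributions of the edges of $\widetilde{M'}$ telescope correctly to the popularity margin. Once that bookkeeping is pinned down, everything else is immediate from the bijection and the fact $w_M(\widetilde{M})=0$.
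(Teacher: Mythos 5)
Your proof is correct: the loop-bijection between matchings of $G$ and perfect matchings of $\widetilde{G}$, together with the per-node telescoping of $w_M(\widetilde{M'})$ into the popularity margin of $M'$ against $M$ and the observation $w_M(\widetilde{M})=0$, is exactly the standard argument. The paper states this theorem as a cited result of Kavitha without reproving it, and your write-up is the same reduction that underlies that citation.
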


\subsubsection{Previous results for testing popularity}\label{subsubsec:previous algos}

 Biró et al. \cite{BIM10} used a similar construction as in \cite{Kav19} and gave an \runtimeold \; algorithm for testing popularity (where $\alpha$ is the inverse Ackermann’s function). Their solution is based on the observation that their reduction is a special case of a maximum weight perfect matching problem with weights  $\{-1,0,1\}$, so the algorithm of Gabow and Tarjan \cite{GT91} can be applied.  
In their paper Biró et al. ask whether it is possible to check popularity with a better running time.

\subsubsection{Previous results on dual witness}\label{subsubsec: previous witness}

According to Theorem \ref{thm:char-popular-mwpm} the popularity of a matching can be characterized as a (zero-weight) maximum weight perfect matching in $\widetilde{G}$. Its polyhedral description is the following (see e.g. LP1 in \cite{Kav19}):

\begin{alignat}{5}
\text{LP1:}&&&&&\quad\label{LP1}\notag\\
&&\text{maximize}    &\sum_{e\in E(\widetilde{G})}& w_M(e) \cdot x_e   &    \quad\nonumber\\
&&\text{subject to}  && \quad &\notag\\
&&&d^{\widetilde{G}}_x(v)=1&  \quad&\forall v\in V\nonumber\\
&&& i^G_x(Z)\leq &(|Z|-1)/2         \quad  \quad&\forall Z\in \mathbb{B}(G)\nonumber\\
&&& x_e \geq 0&  \quad&\forall e \in E(\widetilde{G})\nonumber
\end{alignat}
The dual of LP1 is the following.
\begin{alignat}{4}
\text{LP2}:&&&&\notag\\
&\text{minimize}    &\sum_{v\in V}&\alpha_v+\sum_{Z\in\mathbb{B}(G)}(|Z|-1)/2  \cdot y_Z   \quad   &\nonumber\\
&\text{subject to}  && \quad &\notag\\
&&\alpha_v+\alpha_w+\sum_{v,w\in Z, Z\in \mathbb{B}(G)}y_Z&\geq w_M(vw) \quad & \forall vw\in E(G)\nonumber\\
&& \alpha_v&\geq w_M(vv)  & \forall v\in V\nonumber\\
&& y_Z &\geq 0 \quad &\forall Z\in\mathbb{B}(G)\nonumber 
\end{alignat}

For a popular matching $M$, an optimal dual solution $(\alpha, y)$ is called a \textbf{dual witness}. 
In \cite{Kav19} Kavitha showed that for a popular matching $M$ a dual witness with restricted values can be given.

\begin{theorem}[Kavitha \cite{Kav19}, Lemma 7]
Let $M$ be a popular matching in $G$. Then $M$ has a dual witness $(\alpha, y)$ for LP2 such that $\alpha_v \in \{-1,0,1\}$ for all $ v\in V$ and $y_Z \in \{0,1,2\}$ for all $ Z\in\mathbb{B}(G)$.
\end{theorem}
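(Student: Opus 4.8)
The plan is to start from Theorem~\ref{thm:char-popular-mwpm}, which says that $M$ is popular exactly when $\widetilde{M}$ is a maximum $w_M$-weight perfect matching in $\widetilde{G}$, and hence an optimal primal solution $x^*$ of LP1 is given by the indicator vector of $\widetilde{M}$ with optimum value $0$. By LP duality there exists an optimal dual solution $(\alpha,y)$; the task is to massage it into one taking values in $\{-1,0,1\}$ on nodes and $\{0,1,2\}$ on odd sets. I would first invoke the structure theory of maximum-weight matchings: there is a \emph{Gallai--Edmonds-type} laminar family of odd sets supporting an optimal $y$, so one may assume $\{Z : y_Z>0\}$ is laminar. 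This is the standard starting point for all such rounding arguments (cf. the proof of the Cunningham--Marsh theorem).

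Next I would use complementary slackness to pin down the dual on the support of $\widetilde{M}$. For every loop $uu$ used by $\widetilde{M}$ we get $\alpha_u = w_M(uu)$, which is $0$ if $u\notin V(M)$ and $-1$ if $u\in V(M)$; for every edge $uv\in M$ we get $\alpha_u+\alpha_v+\sum_{u,v\in Z}y_Z = w_M(uv)=0$. Since $w_M$ takes values in $\{-2,0,2\}$ on edges and $\{-1,0\}$ on loops, the right-hand sides of all dual constraints lie in a small integral range; combined with integrality of $w_M$ one can take $(\alpha,y)$ integral to begin with. Then the goal is to show the integral optimum can be chosen with the claimed small magnitudes. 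The key observation is that any odd set $Z$ with $y_Z\geq 3$, or any node with $|\alpha_v|\geq 2$, would force, via the dual constraints and the laminarity of the support, a violation of one of the loop constraints $\alpha_v\ge w_M(vv)\ge -1$ or overcount the weight $0$ available on matched edges. More concretely: pick an inclusionwise-minimal $Z$ in the support with $y_Z\ge 3$ (or a bad node); $M$ restricted to $Z$ is a near-perfect matching, and summing the matched-edge and loop dual equalities over $Z$ expresses $\sum_{v\in Z}\alpha_v$ in terms of $y$-values of sets inside $Z$; pushing this down through the laminar tree yields $\alpha_v\le -2$ for some $v$, contradicting $\alpha_v\ge w_M(vv)\ge -1$. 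Symmetrically, the objective $\sum_v\alpha_v + \sum_Z \tfrac{|Z|-1}{2}y_Z = 0$ caps how much positive dual mass can be placed, ruling out large positive values.

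The main obstacle I expect is the bookkeeping in the laminar reduction: one must simultaneously control node values and odd-set values, and a naive local modification that fixes one bad set may create another. The clean way around this is to argue by a potential/extremality argument — among all integral optimal duals with laminar support, take one minimizing $\sum_v |\alpha_v| + \sum_Z y_Z$ (or lexicographically minimal in some order), and show directly that a value outside $\{-1,0,1\}$ (resp.\ $\{0,1,2\}$) contradicts minimality by exhibiting an explicit improving swap: decrease $y_Z$ by $2$ and increase the $\alpha_v$ for $v\in Z$ appropriately (or shift weight between a set and its children in the laminar tree), checking that feasibility of LP2 and optimality of the objective are preserved because each matched edge has exactly two endpoints in $Z$ and the loop constraints give the necessary slack. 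Once such a minimal witness is shown to be forced into the small ranges, the theorem follows.
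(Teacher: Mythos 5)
Your overall strategy (integral optimal dual via total dual integrality of the matching system, laminar support, then an extremality argument forcing small values) is a legitimate classical line of attack, but the proposal stops exactly where the work begins, and the one concrete modification you do propose fails. In LP2 the objective is $\sum_v \alpha_v + \sum_Z \tfrac{|Z|-1}{2}\, y_Z$; if you decrease $y_Z$ by $2$ and increase $\alpha_v$ by $1$ for every $v \in Z$, feasibility is indeed preserved (edges inside $Z$ lose $2$ from the $y$-term and gain $2$ from the $\alpha$-terms; edges leaving $Z$ only gain), but the objective changes by $-(|Z|-1) + |Z| = +1$, so the new dual has value $1$ instead of $0$ and is no longer a witness. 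Any repair (e.g.\ raising $\alpha_v$ on only part of $Z$) must re-examine feasibility of the $0$- and $2$-weight edges inside $Z$ incident to the untouched nodes, which is precisely the bookkeeping you defer. Likewise, the claim that a set with $y_Z \ge 3$ ``pushes down through the laminar tree'' to force some $\alpha_v \le -2$ is asserted rather than derived, and parity is never addressed (the edge weights are even and the loop weights are $0$ or $-1$, yet the target range $\{0,1,2\}$ for $y$ contains an odd value). As it stands this is a plan for a proof rather than a proof.

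For comparison: the paper does not reprove this statement at all --- it is quoted from Kavitha --- but it establishes the strictly stronger Theorem~\ref{thm:0-2} ($y_Z \in \{0,2\}$, with the $2$-valued sets forming a sub-partition) by an entirely different, constructive route. Instead of massaging an arbitrary optimal dual, it builds the auxiliary graph $G_M^*$, takes its Gallai--Edmonds decomposition, and reads off an explicit witness: $\alpha=-1$ on $D\cap X\cap V$, $\alpha=+1$ on $A\cap X$, $\alpha=0$ elsewhere, and $y=2$ exactly on the odd components of $G_M^*[X\cap D]$ of size at least $3$ (with star nodes swapped for middle nodes); feasibility and zero objective value are then checked directly. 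That construction sidesteps the laminar uncrossing and the extremality argument entirely and is the source of the extra structure ($y$ even, support a sub-partition) that your approach would not naturally yield. To complete your route you would essentially have to reconstruct Kavitha's original Lemma~7 argument; within this paper's framework the Gallai--Edmonds construction is the cleaner path.
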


We show in Section \ref{sec: polyhedra} that it is enough to allow values $\{0,2\}$ on odd sets (Theorem \ref{thm:0-2}).

\subsection{Alternating structure-based characterization of popularity}\label{subsec:alterneting char}

In this subsection we describe the characterization of popularity which will be used in our algorithm.

A \textbf{walk} in $G$ is a sequence of nodes $v_1,v_2,\ldots,v_k$ such that $v_i,v_{i+1}$ is an edge of $G$ for each $1\leq i\leq k-1$.
A \textbf{path} in $G$ is a walk with \emph{distinct} nodes. A \textbf{cycle} is also a walk, such that $v_1=v_k$ and all other pairs of nodes are distinct.  A  walk is \textbf{alternating} with respect to matching $M$, if its alternate edges belong to $M$. 

Consider edge weights $w_M$ defined in Section~\ref{subsec:mwchar}, and let $G_M$ denote the subgraph of $G$ derived by deleting all $uv$ edges with $w_M(uv)=-2$. 

\begin{theorem}[Huang and Kavitha~\cite{HK13}]
  \label{thm:char-popular}
$M$ is popular in $G$ if and only if $G_M$ does not contain any of the following with respect to~$M$:
	\begin{enumerate}[i)]
		\item an alternating cycle with a blocking edge;
		\item an alternating path connecting two disjoint blocking edges;
		\item an alternating path connecting a blocking edge with an unmatched node as an end node.
	\end{enumerate}
\end{theorem}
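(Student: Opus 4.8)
The plan is to deduce the statement from Theorem~\ref{thm:char-popular-mwpm}. Since $\widetilde{M}$ is always a perfect matching of $w_M$-weight $0$ in $\widetilde{G}$, that theorem says $M$ is popular iff there is \emph{no} perfect matching $N$ of $\widetilde{G}$ with $w_M(N)>0$, and I would first recast this in $G$ itself. Given such an $N$, let $N_g$ be its non-loop edges; then $M\triangle N_g$ is a vertex-disjoint union of $M$-alternating paths and cycles. Using $w_M(e)=0$ for $e\in M$, $w_M(uu)=-1$ exactly for $u\in V(M)$, and the observation that a node of $V(M)$ is left unmatched by $N_g$ precisely when it is an endpoint of a path of $M\triangle N_g$ whose incident path-edge lies in $M$ (an \emph{$M$-end}), a short computation yields
\[
 w_M(N)=\sum_{C}\Bigl(\,\sum_{e\in C\setminus M}w_M(e)-\#\{M\text{-ends of }C\}\,\Bigr)=\sum_C \mathrm{val}(C),
\]
the outer sum over the components $C$ of $M\triangle N_g$ and $\mathrm{val}(C)$ denoting the bracketed quantity. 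Conversely, any single $M$-alternating path or cycle $C$ such that every endpoint whose incident path-edge is \emph{not} in $M$ is $M$-unmatched (I call $C$ \emph{augmenting-feasible}; cycles qualify vacuously) can be completed by loops to a perfect matching $N$ of $\widetilde{G}$ with $w_M(N)=\mathrm{val}(C)$. Hence $M$ fails to be popular iff $G$ contains an augmenting-feasible $M$-alternating path or cycle $C$ with $\mathrm{val}(C)>0$, and it remains to match this with configurations (i)--(iii) in $G_M$.

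For the ``if'' direction I would build such a $C$ from each configuration. A cycle as in (i) lies in $G_M$, so each of its non-$M$ edges has weight $0$ or $2$ with at least one of weight $2$; thus $\mathrm{val}(C)\ge 2$. For (ii) and (iii) the given alternating path $P$ already carries the required blocking edges; the only defect is that a non-$M$ endpoint of $P$ which is the outer vertex of a blocking edge may be $M$-matched, so $P$ need not be augmenting-feasible. This is repaired by appending at each such endpoint the $M$-edge incident to it: this edge has weight $0$ (hence lies in $G_M$) and, since its endpoint sits on a blocking edge, is not already on $P$, so the result is still a simple alternating path; each such extension converts a problematic endpoint into an $M$-end and decreases $\mathrm{val}$ by exactly $1$. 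As (ii) provides two blocking edges and (iii) provides one while leaving its unmatched end untouched, the extended path is augmenting-feasible with $\mathrm{val}\ge 2$ and $\ge 1$ respectively, which is positive.

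For the ``only if'' direction, take an augmenting-feasible $C$ with $\mathrm{val}(C)>0$. The first step is a cleanup lemma: if $C$ uses an edge $e$ of weight $-2$ (necessarily outside $M$), then deleting $e$ and retaining the component(s) of positive value produces a strictly shorter augmenting-feasible alternating structure of positive value with one fewer weight-$(-2)$ edge — for a cycle this gives a path of the same $\mathrm{val}$, and for a path the gain $-w_M(e)=2$ exactly offsets the at most two new $M$-end penalties. Iterating, I may assume $C\subseteq G_M$, so $\mathrm{val}(C)=2\,(\#\text{blocking edges of }C)-\#\{M\text{-ends of }C\}>0$. A case analysis on the shape of $C$ then finishes: if $C$ is a cycle it contains a blocking edge, which is (i); if $C$ is a path it has at most two $M$-ends, hence at least one blocking edge when it has $\le 1$ $M$-end and at least two when it has two, while each of its non-$M$ endpoints is $M$-unmatched. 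In the first sub-case, truncating $C$ from an $M$-unmatched endpoint up to and including the first blocking edge is an instance of (iii); in the second, the stretch of $C$ between two blocking edges is an alternating path whose first and last edges are those blocking edges, which are vertex-disjoint because $C$ alternates, i.e.\ an instance of (ii).

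The main obstacle I anticipate is bookkeeping rather than a single hard idea: setting up the correspondence $N\leftrightarrow C$ with the value function $\mathrm{val}$ so that loops, unmatched vertices and the $M$-end penalty are tracked consistently, and then verifying that the cleanup lemma and the truncations in the last step really remain inside $G_M$, preserve positivity of $\mathrm{val}$, and do not spoil augmenting-feasibility. A smaller but genuine point is the ``if'' direction, where one must check that the matching $N$ obtained after the $M$-edge extensions is actually a perfect matching of $\widetilde{G}$ — which is exactly what augmenting-feasibility ensures.
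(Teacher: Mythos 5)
This statement is quoted from Huang and Kavitha \cite{HK13} and the paper gives no proof of it, so there is nothing internal to compare against; your derivation of it from Theorem~\ref{thm:char-popular-mwpm} is a legitimate independent route, and the overall architecture (decompose $M\triangle N_g$, charge the $-1$ loops to $M$-ends, clean out the $-2$ edges, then truncate) is sound and does reduce correctly to configurations (i)--(iii). Two points need patching before the argument is complete. First, in the ``only if'' direction you implicitly use that every component of $M\triangle N_g$ is augmenting-feasible; this is true (an endpoint whose incident component-edge lies in $N_g\setminus M$ cannot be $M$-matched, since its $M$-edge would either collide with $N_g$ being a matching or lie in $M\setminus N_g$ and raise its degree to $2$), but it is not stated, and without it the equivalence ``not popular iff an augmenting-feasible $C$ with $\mathrm{val}(C)>0$ exists'' has only one direction. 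Second, in the ``if'' direction for configuration (ii), your justification that the appended $M$-edge keeps the path simple (``its endpoint sits on a blocking edge'') only shows the \emph{edge} $v_1M(v_1)$ is not on $P$; the \emph{node} $M(v_1)$ can coincide with the far endpoint $v_k$ of $P=v_1\hbox{-}\cdots\hbox{-}v_k$ (interior nodes are excluded because their unique $M$-edge already lies on $P$, but $v_k$'s incident $P$-edge is the other blocking edge, so $M(v_k)=v_1$ is possible). In that case the extension closes $P$ into an alternating cycle containing a blocking edge, which still has positive value and is vacuously feasible, so the conclusion survives --- but the case must be treated separately rather than asserted away. With these two repairs, and modulo the degenerate single-vertex components that can appear in the cleanup step (which have value $0$ and can be discarded), the proof goes through.
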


The structures above proving unpopularity of a matching will be called \textbf{blocking structures}. The cycle in case i) is a \textbf{blocking cycle} and paths in cases ii) and iii) are \textbf{blocking paths}.

\section{Reduction to testing maximum matching}\label{sec:reduction}

We will show that popularity testing can be reduced to testing whether a given matching in a graph is maximum or not.
In contrast to previous approaches, our algorithm relies on the characterization based on excluded blocking structures (see Theorem \ref{thm:char-popular}).

Let $B\subset E$ denote the set of blocking edges.
Consider the set of nodes $V(B)$ incident to blocking edges.
A node $v \in V(B)$ is called a \textbf{leaf node} if there is only one blocking edge $uv$ incident to $v$.
If there are at least two leaf nodes $v_1,v_2,\ldots$ with blocking edges $zv_i$ incident to the same neighbor $z$, they form a \textbf{star} $S=\{zv_1,zv_2,\ldots,zv_k\}$  ($k\geq 2$) with \textbf{middle node} $z$. Note that $z$ may have other incident blocking edges connecting $z$ to non-leaf nodes. 

We will define an auxiliary graph $G_M^*=(V^*,E^*)$. We take $G_M$ as a starting graph for $G_M^*$ and make the following steps (see Fig. \ref{fig:G_M^*}). 

\begin{enumerate}
    \item For every node $v \in V(B)$ which is not a leaf node in a star we add an extra node $b_v$ and edge $v b_v$. (Such a node $b_v$ is called a \textbf{blocking node}.)
    \item For leaf nodes in a star $S=\{zv_1,zv_2,\ldots,zv_k\}$ we add one common extra node $b_S$ and edges $v_ib_S$ ($1\leq i\leq k$). (Node $b_S$ is called a \textbf{star node}.)
    \item We create a new node $u$, and contract the set of unmatched nodes $V\setminus V(M)$ to~$u$.
    \item We delete the set $B$ of blocking edges from $G_M^*$.

\end{enumerate}
For easier navigation between $G$ and $G_M^*$ we define a function $b:V(B)\to V^*$:
\begin{equation}
b(v)=
\begin{cases}
      b_v& \text{ if } v \text{ is not a leaf in a star } \\
      b_S& \text{ if } v \text{ is a leaf in star } $S$.
\end{cases}
\end{equation} 
 Note that matching $M$ is also a matching in $G_M^*$ and the set of unmatched nodes in $G_M^*$ is $b(V(B))\cup\{u\}$.

 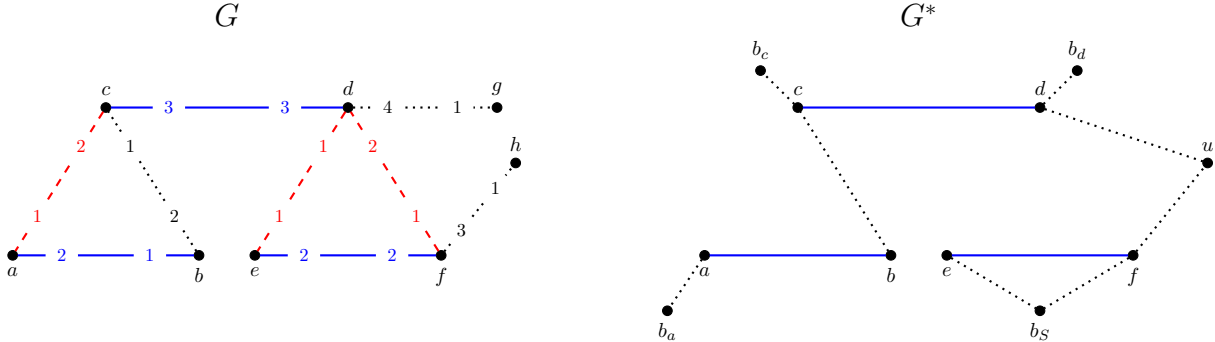
\begin{figure}[ht]
\begin{tikzpicture}[scale=0.7, transform shape]

  \pgfmathsetmacro{\d}{3.5}	
  \pgfmathsetmacro{\b}{3}
  \pgfmathsetmacro{\D}{13}
		
    \node[vertex, label=below:$a$] (A1) at (0,0) {};
	\node[vertex, label=below:$b$] (A2) at ($(A1) + (\d, 0)$) {};
	\node[vertex, label=above:$c$] (A3) at ($(A1) + (0.5*\d, 0.8*\d)$) {};

	\draw [thick, color=blue] (A1) -- node[edgelabel, near start] {2} node[edgelabel, near end] {1} (A2);
	\draw [thick, dotted] (A2) -- node[edgelabel, near start] {2} node[edgelabel, near end] {1} (A3);
	\draw [thick,  dashed, color=red] (A3) -- node[edgelabel, near start] {2} node[edgelabel, near end] {1} (A1);

    \node[] (G) at (1.15*\d, 1.3*\d) {\LARGE{$G$}};

     \node[vertex, label=below:$e$] (B1) at ($(A1) + (\d*1.3, 0)$) {};
	\node[vertex, label=below:$f$] (B2) at ($(B1) + (\d, 0)$) {};
	\node[vertex, label=above:$d$] (B3) at ($(B1) + (0.5*\d, 0.8*\d)$) {};
	\node[vertex, label=above:$g$] (B4) at ($(B1) + (1.3*\d, 0.8*\d)$) {};
    \node[vertex, label=above:$h$] (B5) at ($(B1) + (1.4*\d, 0.5*\d)$) {};

	\draw [thick, color=blue] (B1) -- node[edgelabel, near start] {2} node[edgelabel, near end] {2} (B2);
	\draw [thick, dashed, color=red] (B2) -- node[edgelabel, near start] {1} node[edgelabel, near end] {2} (B3);
	\draw [thick,  dashed, color=red] (B3) -- node[edgelabel, near start] {1} node[edgelabel, near end] {1} (B1);

    \draw [thick, color=blue] (B3) -- node[edgelabel, near start] {3} node[edgelabel, near end] {3} (A3);

    \draw [thick, dotted] (B3) -- node[edgelabel, near start] {4} node[edgelabel, near end] {1} (B4);

    \draw [thick, dotted] (B2) -- node[edgelabel, near start] {3} node[edgelabel, near end] {1} (B5);

    \node[vertex, label=below:$a$] (A1) at (\D,0) {};
	\node[vertex, label=below:$b$] (A2) at ($(A1) + (\d, 0)$) {};
	\node[vertex, label=above:$c$] (A3) at ($(A1) + (0.5*\d, 0.8*\d)$) {};
	\node[vertex, label=above:$b_c$] (A4) at ($(A1) + (0.3*\d, 1*\d)$) {};
	\node[vertex, label=below:$b_a$] (A5) at ($(A1) + (-0.2*\d, -0.3*\d)$) {};

	\draw [thick, color=blue] (A1) --  (A2);
	\draw [thick, dotted] (A2) -- (A3);

    \node[] (G*) at ($(A1) + (1.15*\d, 1.3*\d)$) {\LARGE{$G_M^*$}};

     \node[vertex, label=below:$e$] (B1) at ($(A1) + (1.3*\d, 0)$) {};
	\node[vertex, label=below:$f$] (B2) at ($(B1) + (\d, 0)$) {};
	\node[vertex, label=above:$d$] (B3) at ($(B1) + (0.5*\d, 0.8*\d)$) {};
    \node[vertex, label=above:$b_d$] (B4) at ($(B1) + (0.7*\d, 1*\d)$) {};
     \node[vertex, label=below:$b_S$] (B5) at ($(B1) + (0.5*\d, -0.3*\d)$) {};

    \node[vertex, label=above:$u$] (B6) at ($(B1) + (1.4*\d, 0.5*\d)$) {};

	\draw [thick, color=blue] (B1) --  (B2);

 \draw [thick, color=blue] (B3) -- (A3);
 \draw [thick, dotted] (A1) -- (A5);
 \draw [thick, dotted] (A3) -- (A4);
 \draw [thick, dotted] (B3) -- (B4);
 \draw [thick, dotted] (B1) -- (B5);
 \draw [thick, dotted] (B2) -- (B5);
 \draw [thick, dotted] (B3) -- (B6);
 \draw [thick, dotted] (B2) -- (B6);

\end{tikzpicture}
\caption{An example for $G_M^*$. Matching $M=\{ab, cd, ef\}$, blocking edges are $\{ac,de,df\}$. There is one star $S=\{de,df\}$ with middle node $d$. There are several options to see that $M$ is not popular. First, there is an augmenting path $b_c-c-d-b_d$ in $G_M^*$ connecting $b_c$ and $b_d$, thus there is an alternating path $a-c-d-f$ connecting blocking edges $ac$ and $df$, so matching $M'=\{ac,df\}$ is more popular than $M$. There is another alternating path $b_c-c-d-u$ in $G_M^*$ giving that matching $M''=\{ac,dg,ef\}$ is also more popular than $M$. Finally alternating path $b_S-e-f-u$ shows that matching $M'''=\{ab,de,fh\}$ is also more popular than $M$.}\label{fig:G_M^*}
\end{figure}

\medskip
The following theorem connects popularity of $M$ in $G$ with maximality of $M$ in $G_M^*$.
\begin{theorem}\label{thm:G*}
Let an instance of the popular roommates problem be given on graph $G=(V,E)$ with preferences $\succ_v$ and let $G_M^*$ be the auxiliary graph defined above for matching $M$. Then $M$ is popular in $G$ if and only if $M$ is maximum size in $G_M^*$.
Testing the popularity of $M$ can be done in \runtime.
\end{theorem}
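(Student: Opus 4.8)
The plan is to establish the biconditional by relating blocking structures in $G_M$ (from Theorem~\ref{thm:char-popular}) with $M$-augmenting paths in $G_M^*$, since $M$ is not maximum in $G_M^*$ precisely when such an augmenting path exists. Note first that the nodes left unmatched by $M$ in $G_M^*$ are exactly $u$ together with $b(V(B))$; an augmenting path must run between two of these. So I would split the argument according to the endpoints of a putative augmenting path: (a) both endpoints in $b(V(B))$, (b) one endpoint $u$ and the other in $b(V(B))$, (c) both endpoints equal to $u$ — but $u$ is a single node, so this last case cannot give a path, although it could a priori be relevant for cycles; I would observe that a blocking cycle through the unmatched-node-contraction naturally corresponds to an augmenting path at $u$, so cases (a)--(c) should match up, respectively, with cases (ii), (iii), and (i) of Theorem~\ref{thm:char-popular}.

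For the forward direction (contrapositive: $M$ not maximum in $G_M^*$ $\Rightarrow$ $M$ not popular), I would take a shortest $M$-augmenting path $P$ in $G_M^*$ and translate it back to $G$. The pendant edges $v b(v)$ and the edges incident to the contracted node $u$ in $G_M^*$ are not edges of $G$; stripping the pendant edge at a blocking node $b_v$ exposes $v$, which is incident to a blocking edge in $G$, and the blocking edge $uv$ (deleted in step~4) together with the rest of $P$ yields the required alternating structure. When a star node $b_S$ is an endpoint, the exposed leaf $v_i$ has its unique blocking edge $z v_i$; the star mechanism is exactly what prevents two distinct leaves of the same star from being used as both endpoints (which would wrongly suggest unpopularity even though the two blocking edges $zv_i,zv_j$ share the node $z$ and cannot be disjoint). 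I expect verifying that minimality of $P$ forces the resulting walk in $G$ to be a genuine alternating path/cycle of the prescribed form — in particular that we do not revisit nodes and that the exposed blocking edges are distinct when case~(ii) demands disjointness — to be the main technical obstacle, and the precise reason stars are introduced rather than individual blocking nodes for leaves.

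For the reverse direction ($M$ not popular $\Rightarrow$ $M$ not maximum in $G_M^*$), I would take a blocking structure of one of the three types from Theorem~\ref{thm:char-popular} — which lives in $G_M$, and hence uses no weight-$(-2)$ edge, so all its non-blocking edges survive into $G_M^*$ — and build an $M$-augmenting path in $G_M^*$. A blocking cycle (case~i) passes through a vertex incident to the blocking edge and an alternating closed walk; rerouting through the contracted vertex $u$ (or, if needed, handling the case where the cycle uses an unmatched node) gives the augmenting path. For cases (ii) and (iii), I attach the pendant/star edges $v b(v)$ at the blocking-edge endpoints of the path and, in case~(iii), route the unmatched end through $u$; deleting the blocking edges themselves (step~4) is harmless because a blocking structure other than in case~(i) uses at most its two terminal blocking edges and those are replaced by the pendant edges. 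Care is needed so the constructed object stays a simple path after contraction to $u$: if two nodes of the blocking structure were both unmatched they are identified in $G_M^*$, so one should argue the blocking structure can be chosen minimal enough that at most one unmatched node appears, or shortcut through $u$ otherwise.

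Finally, for the running-time claim I would note that all of $B$, the leaf/star classification, $G_M$, and the construction of $G_M^*$ take $\mathcal{O}(|E|)$ time (reading preferences and comparing ranks; each step~1--4 is a single linear pass, and the contraction in step~3 is just a union), so $|E^*|=\mathcal{O}(|E|)$ and $|V^*|=\mathcal{O}(|V|)$. Testing whether the given matching $M$ is maximum in $G_M^*$ is one augmenting-path search: since we start from an actual matching and only need to detect a single augmenting path (not to compute a maximum matching from scratch), a single phase of Gabow's linear-time blossom search — or equivalently one BFS/DFS with blossom contraction — decides it in $\mathcal{O}(|E^*|)=\mathcal{O}(|E|)$ time, giving the stated \runtime bound.
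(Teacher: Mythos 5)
Your plan is the same as the paper's: translate between augmenting paths in $G_M^*$ (via the "maximum iff no augmenting path" characterization) and the three blocking structures of Theorem~\ref{thm:char-popular}, with the star nodes existing precisely to stop two leaves of one star from faking a pair of disjoint blocking edges, and the running-time claim resting on a linear-time construction plus a linear-time maximality test. The direction "blocking structure $\Rightarrow$ augmenting path" and the complexity analysis are essentially complete as you sketch them (taking the blocking structure minimal so it contains no interior blocking edges, and checking $b(v_1)\neq b(v_k)$, resp.\ $b(v_2)\neq b(v_{k-1})$, using that the two relevant blocking nodes cannot be leaves of the same star).

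The genuine gap is in the other direction, exactly where you flag "the main technical obstacle": you propose to handle it by taking a \emph{shortest} augmenting path $P=w_1-v_1-\cdots-v_k-w_2$ in $G_M^*$, but shortestness does not do the work. The blocking edge $y_1v_1$ witnessed by the endpoint $w_1=b(v_1)$ was deleted from $G_M^*$ in step~4, so no choice of $P$ inside $G_M^*$ can prevent $y_1$ from coinciding with some internal node $v_j$ of $P$; the naive concatenation $y_1-v_1-\cdots-v_k-y_2$ is then not a path at all. What is actually needed is a parity case analysis: if $y_1=v_j$ with $j$ even, the segment $v_1-\cdots-v_j-v_1$ closes into a blocking cycle (type~i); if $j$ is odd one instead uses $v_1v_j$ followed by $v_{j+1}-\cdots-v_k-y_2$, and the doubly degenerate case $y_1=v_j$, $y_2=v_l$ both on $P$ requires yet another construction depending on whether $j<l$ or $l<j$. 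The same issue arises when one endpoint is $u$ (the blocking edge at the other end may return to the path). A second sub-case you gesture at but do not resolve: the two blocking edges at $v_1$ and $v_k$ may share their far endpoint ($z_1=z_2$) without $v_1v_k$ being blocking; here one must argue that $v_1$ and $v_k$ cannot both be leaves (else they would share a star node, contradicting $w_1\neq w_2$), so one of them carries a second blocking edge giving the required disjoint pair. These case analyses are the substance of this direction of the proof and are missing from the proposal.
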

\begin{proof}
We will use the well-known characterization that a matching is maximum size if and only if there is no alternating path connecting two unmatched nodes.

$\Leftarrow:$

According to the characterization in Theorem \ref{thm:char-popular} there is a blocking structure in $G_M$. We may assume that this structure is minimal with respect to containment.

If case \emph{i)} holds, let $C=v_1-v_2-\ldots-v_k-v_1$ denote the blocking cycle ($k\geq 4$), where edge $v_1v_k$ is a blocking edge. Because of the minimality of $C$, the cycle does not contain other blocking edges.
Consider alternating walk $W=b(v_1)-v_1-v_2-\ldots-v_k-b(v_k)$ in $G_M^*$. 
If $b(v_1)=b(v_k)$ were true, nodes $v_1$ and $v_k$ would be leaf nodes of a star and they would not be connected by a blocking edge, so we have $b(v_1)\neq b(v_k)$. Thus $W$ is an alternating path in $G_M^*$ connecting two unmatched nodes.

If case \emph{ii)} holds, let $P=v_1-v_2-\ldots-v_{k-1}-v_k$ ($k\geq 4$) denote the blocking path between disjoint blocking edges $v_1v_2$ and $v_{k-1}v_k$. Because of the minimality of $P$ it does not contain other blocking edges.
Consider alternating walk $W=b(v_2)-v_2-\ldots-v_{k-1}-b(v_{k-1})$ in $G_M^*$. 
If $b(v_2)=b(v_{k-1})$ were true, nodes $v_2$ and $v_{k-1}$ would be leaf nodes of the same star and they could not be part of two disjoint blocking edges, so we have $b(v_2)\neq b(v_{k-1})$, thus $W$ is again an alternating path connecting two unmatched nodes.

Finally, if case \emph{ii)} holds, let $P=v_1-v_2-\ldots-v_{k}-w$ ($k\geq 3$) denote the blocking path between blocking edge $v_1v_2$ and unmatched node $w$. 
Because of the minimality,  $v_1v_2$ is the only blocking edge in $P$, and alternating path $W=b(v_2)-v_2-\ldots-v_{k}-u$ is in $G_M^*$ connecting unmatched nodes $b(v_2)$ and $u$. 

Now we turn to the 'if' part of the theorem.


$\Rightarrow:$

Let $w_1-v_1-\ldots-v_k-w_2$ be an alternating path connecting unmatched nodes $w_1$ and $w_2$ in $G_M^*$.
First we consider the case when there is a blocking edge between $v_1$ and $v_k$.
\begin{claim}\label{cl:cycle}
If there is a blocking edge connecting $v_1$ and $v_k$ in $G$, then there is a blocking cycle in $G_M$ through $v_1v_k$.
\end{claim}
\begin{proof}
Straightforward, since $v_1-\ldots-v_k$ is an alternating path also in $G$ connecting matching edges $v_1v_2$ and $v_{k-1}v_k$, so $v_1-v_2-\ldots-v_k-v_1$ is a blocking cycle. 
\end{proof}
Second we consider the case when one of the endpoints is $u$.
\begin{claim}\label{cl:if w_1=u}
If $w_1$ or $w_2$ is $u$, then there is a blocking structure in $G_M$.
\end{claim}
\begin{proof}
Assume $w_1=u$. Since $uv_1\in E^*$, there is an edge $xv_1\in E$ where $x$ is a node not matched by $M$.
Since $w_2\neq u$, we have $b(v_k)=w_2$ thus there is a blocking edge $v_ky$ covering $v_k$ (or a blocking edge $xy$ if $k=0$).
If $y\neq v_i$ ($1 \leq i\leq k$), path $x-v_1-\ldots-v_k-y$ is blocking,  connecting unmatched node $x$ to blocking edge $v_ky$.

If $y=v_i$, depending on the parity of $i$ we have either a blocking cycle or a blocking path in $G_M$: if $i$ is odd, we have blocking cycle $v_i-v_{i+1}-\ldots-v_k-v_i$ (note that $k$ is even), whereas if $i$ is even, we have blocking path $x-v_1-\ldots-v_{i-1}-v_i-v_k$.  
\end{proof}
Finally we investigate when cases of Claims \ref{cl:cycle}, \ref{cl:if w_1=u} do not hold.
\begin{claim}\label{cl:disjoint blockings}
If $w_1, w_2\neq u$ and $v_1v_k$ is not a blocking edge, then there exist disjoint blocking edges $y_1v_1$ and $y_2v_k$ in $G_M$.
\end{claim}
\begin{proof}
Since $w_i\neq u$, there are blocking edges $z_1v_1$ and $z_2v_k$. Note that $z_1=v_k$ and $z_2=v_1$ cannot hold because $v_1v_k$ is not a blocking edge.
If $z_1\neq z_2$, the blocking edges are disjoint, and we are done by choosing $z_i=y_i$ ($i=1,2$). 

Else $z_1=z_2$. If both $v_1$ and $v_k$ were leaves, they would be leaf nodes in the same star with middle node $z_2$, but this is not possible since $w_1\neq w_2$. So we may assume that $v_1$ is not a leaf node and it has another incident blocking edge $z_3v_1$. Then we can choose $y_1:=z_3$ and $y_2:=z_2$.
\end{proof}

\begin{claim}\label{cl:path+2edges}
If there are disjoint blocking edges $y_1v_1$ and $y_2v_k$ in $G_M$, then there is a blocking structure in $G_M$.
\end{claim}
\begin{proof}
If $y_i\neq v_j$ ($i=1,2$, $1< j < k$), then $y_1-v_1-v_2-\ldots-v_k-y_2$ is a blocking path connecting two disjoint blocking edges.

If $y_1= v_j$ and $j$ is even, then $v_1-v_2-\ldots-v_j-v_1$ is a blocking cycle through $v_1v_j$. 

If $y_1= v_j$, $j$ is odd, and $y_2\neq v_i$ ($1<i<k$), then $v_1v_j-v_{j+1}-\ldots-v_k-y_2$ is a blocking path.

Analogously, if $y_2= v_l$ and $l$ is odd, then there  is a blocking cycle $v_l-v_{l+1}-\ldots-v_k-v_l$  through $v_lv_k$, and if $l$ is even and $y_1\neq v_i $ ($1<i< k$), then there is a blocking path again.

Else we have that $y_1= v_j$ with $j$ odd and $y_2=v_l$ with $l$ even. Now if $j<l$, then path $v_1-v_j-v_{j+1}-\ldots-v_{l-1}-v_l-v_k$ is a blocking path connecting $v_1v_j$ and $v_lv_k$, whereas if $l<j$, then path $v_j-v_1-v_2-\ldots-v_{l-1}-v_l-v_k$ is blocking.
\end{proof}

Combining these claims we can prove the 'if' part of the theorem: if $w_1$ or $w_2$ is $u$, we can apply Claim \ref{cl:if w_1=u} and get a blocking structure. Also, if $v_1v_k$ is a blocking edge, there is a blocking structure by Claim \ref{cl:cycle}. If neither of the above hold, we can apply Claim \ref{cl:disjoint blockings} and then Claim \ref{cl:path+2edges}, which proves the existence of a blocking structure in this case too. This proves the 'if' part of the theorem, so the proof of the first sentence of the theorem is complete.
Since deciding whether a given matching is maximum size can be tested in linear time \cite{GT85}, we get indeed an \runtime  \;algorithm for testing the popularity of a matching, which concludes the theorem.
\end{proof}

\section{Dual witness}\label{sec: polyhedra}

In this section we investigate some consequences of the maximum matching approach for popularity testing. 
The goal is to show that an optimal solution for LP2 with restricted values can be given, summarized in the following theorem.

\begin{theorem}\label{thm:0-2}
Let a popular roommates instance be given on graph $G$. If $M$ is a popular matching in $G$, then  there exists a dual witness for LP2 with $\boldsymbol{\alpha}\in\{0,\pm 1\}^n$ and $\boldsymbol{y}\in \{0,2\}^{|\mathbb{B}|}$. Moreover, it can be assumed that odd sets of value 2 form a sub-partition of $V$.
\end{theorem}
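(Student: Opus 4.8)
The plan is to build the dual witness directly from the combinatorial certificate that $M$ is maximum in $G_M^*$, i.e. from a Gallai--Edmonds--type structure on $G_M^*$, and then pull it back to $G$. Concretely, since $M$ is popular, Theorem~\ref{thm:G*} tells us $M$ is a maximum matching in $G_M^*$; apply the Gallai--Edmonds decomposition to $G_M^*$ to obtain the partition $V^* = D^* \cup A^* \cup C^*$, where $D^*$ is the set of inessential nodes, $A^* = N(D^*)\setminus D^*$, and $C^*$ the rest. The standard LP-duality reading of this decomposition gives an optimal dual solution to the (maximum cardinality) matching LP on $G_M^*$ with potentials in $\{0,\tfrac12,1\}$ on nodes and odd-set duals equal to $1$ exactly on the (inclusionwise maximal) odd components of $G_M^*[D^*]$, which form a sub-partition of $D^*$. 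Scaling by $2$ turns these into integer values $\{0,1,2\}$ on nodes and $\{0,2\}$ on odd sets, with the odd sets of value $2$ pairwise disjoint. The key point is that the maximum-matching instance on $G_M^*$ has objective $|M|$ (since $M$ is optimal), and the slack structure will let us convert node potentials into signed values $\{0,\pm1\}$ once we translate through the weight function $w_M$.

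The heart of the argument is the translation between the cardinality LP on $G_M^*$ and LP2 on $\widetilde G$. I would set up the correspondence as follows. First recall how $G_M^*$ was built from $G_M$: non-leaf blocking nodes get a pendant $v b_v$, star leaves share a pendant $b_S$, unmatched nodes are contracted to $u$, and blocking edges are deleted. Define $\alpha_v$ on $V$ by reading off the (scaled) Gallai--Edmonds potential on $G_M^*$: for an ordinary node $v$ set $\alpha_v$ to be $+1$ if $v\in A^*$, $0$ if $v\in C^*$, and $-1$ if $v\in D^*$ (after the appropriate half-integer bookkeeping these are the only values that occur, because $M$ saturates all of $C^*$ and all of $A^*$ and matches $D^*$ into itself and $A^*$). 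For the $y$-duals, take $y_Z = 2$ exactly when $Z$ is the vertex set in $G$ corresponding to an odd component $K$ of $G_M^*[D^*]$ — more precisely, $Z$ consists of the ordinary vertices of $K$, the unmatched vertices of $G$ collapsed to $u$ if $u\in K$, and nothing from the pendant blocking/star nodes (those are handled by the $\alpha$-values of the leaves together with the fact that $w_M$ on a blocking edge is $2$). Because distinct odd components of $G_M^*[D^*]$ are vertex-disjoint and their pullbacks to $G$ remain disjoint (the only shared auxiliary vertices, the star nodes $b_S$, connect leaves that all sit in the same component), the resulting family $\{Z : y_Z = 2\}$ is automatically a sub-partition of $V$, which is the ``moreover'' clause.

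Then I would verify feasibility and optimality of $(\alpha,y)$ for LP2 by case analysis on edge type. For a non-blocking matching or non-matching edge $vw$ of $G$ that survives in $G_M$, we have $w_M(vw)\in\{-2,0\}$; since $G_M$ drops the weight-$(-2)$ edges, the relevant edges have weight $0$ and the inequality $\alpha_v + \alpha_w + \sum_{v,w\in Z} y_Z \ge 0$ follows from the fact that the Gallai--Edmonds potential is dual-feasible for cardinality matching on $G_M^*$ (an edge with both ends in $D^*$ lies inside one odd component, contributing $y_Z = 2$ and $\alpha$-values $-1,-1$, summing to $0$; an edge with an end in $A^*$ has $\alpha\ge +1$ on that end and $\alpha\ge -1$ on the other). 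For a blocking edge $vw$, where $w_M(vw)=2$, I use that in $G_M^*$ the endpoint $v$ is adjacent to its blocking node $b(v)$, which is unmatched and hence in $D^*$, forcing $v\in A^*\cup D^*$; a short argument (blocking nodes are isolated-type leaves of $D^*$, their neighbours lie in $A^*$) shows that in fact both $v,w\in A^*$ unless they lie in a common odd set, giving $\alpha_v+\alpha_w \ge 2$ or $\alpha_v+\alpha_w + y_Z \ge -1-1+2$ — here is where one must be careful, and I would double-check the star case, since a leaf $v$ of a star $S$ has $b(v)=b_S$ shared with its siblings. For the loop constraints $\alpha_v \ge w_M(vv)$: if $v\notin V(M)$ then $v$ was contracted into $u$, and $w_M(vv)=0$, so we need $\alpha_u \ge 0$, which holds because $u$, being matched by $M$ in $G_M^*$, lies in $A^* \cup C^*$; if $v\in V(M)$ then $w_M(vv)=-1 \le \alpha_v$ trivially. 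Finally, for optimality I would check complementary slackness against the perfect matching $\widetilde M$ (a feasible primal point of LP1), or equivalently compare the dual objective $\sum_v \alpha_v + \sum_Z \tfrac{|Z|-1}{2} y_Z$ with $0 = w_M(\widetilde M)$ using the cardinality-count $|D^*| - |A^*| + \sum (\text{odd components}) = |V^*| - 2|M|$ identity from Gallai--Edmonds, together with the vertex bookkeeping that relates $|V^*|$, $|V|$, $|V(B)|$, and the number of unmatched vertices. The main obstacle is exactly this last reconciliation — threading the auxiliary vertices ($b_v$, $b_S$, $u$) correctly through both the vertex counts and the odd-set cardinalities so that the scaled Gallai--Edmonds dual, which is optimal for \emph{cardinality} matching on $G_M^*$ with value $|M|$, comes out optimal for LP2 on $\widetilde G$ with value $0$; the blocking-edge and star-leaf cases are where a naive translation would be off by a constant, so those deserve the most careful treatment.
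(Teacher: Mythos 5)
Your overall strategy --- run Gallai--Edmonds on $G_M^*$ and read off a $\{0,\pm1\}$ / $\{0,2\}$ dual --- is the same as the paper's, but as stated your construction is infeasible, and the places where it breaks are exactly the ones you flagged as ``needing care'' without resolving them. The most concrete error is your claim that $u$ is matched by $M$ in $G_M^*$ and hence lies in $A^*\cup C^*$: it is not. The unmatched nodes of $G_M^*$ are precisely $b(V(B))\cup\{u\}$, so $u\in D^*$, and your blanket rule $\alpha_v=-1$ for $v\in D^*$ then assigns $-1$ to every unmatched node of $G$ (all contracted into $u$), violating the loop constraint $\alpha_v\ge w_M(vv)=0$. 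Relatedly, the component $K$ of $G_M^*[D^*]$ containing $u$ pulls back to a vertex set of $G$ whose parity you cannot control (replacing the single node $u$ by all unmatched vertices of $G$ destroys oddness), so $y_K=2$ is not even a legal dual variable there. The paper's fix is an extra idea you are missing: restrict everything to the set $X$ of nodes reachable by $M$-alternating paths from the blocking/star nodes. Since $M$ is maximum in $G_M^*$, $u\notin X$, there are no edges from $X\cap D$ to $V^*\setminus X$, and all nodes outside $X$ (in particular all unmatched nodes and the whole component of $u$) safely receive $\alpha=0$ and no odd-set dual.

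The second unresolved point is the star case. If an odd component $Z$ of $G_M^*[X\cap D]$ has the star node $b_S$ as its root, then dropping $b_S$ (``nothing from the pendant blocking/star nodes,'' as you propose) leaves an \emph{even} set of ordinary vertices, so your $Z$ is not in $\mathbb{B}(G)$; and even if it were, the constraint on a blocking star edge $xy$ ($w_M(xy)=2$) with leaf $y\in D$, $\alpha_y=-1$, would fail. The paper repairs both problems at once by substituting the \emph{middle node} $q$ of the star for $b_S$ (possible because $q\in A$, hence $q\notin Z$), which restores odd parity and yields $y_{Z'}+\alpha_q+\alpha_y=2+1-1=2$ on such blocking edges. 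With the restriction to $X$ and this substitution in place, the feasibility case analysis and the zero-objective computation go through essentially as you sketch; without them, the witness you define is not dual feasible.
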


We will prove this theorem in Subsection \ref{subsec:witness from GE} by using the Gallai-Edmonds decomposition of $G_M^*$.
An application is presented in Subsection \ref{subsec:frac} where we give a combinatorial characterization of fractional popular matchings.

\subsection{Dual witness from the Gallai-Edmonds decomposition}\label{subsec:witness from GE}

In this subsection we show that a witness described in Theorem \ref{thm:0-2} can be given
from the Gallai-Edmonds Decomposition of $G_M^*$. First we summarize some important properties of this decomposition. For more details we recommend the book of Schrijver \cite{Sch03}.

\subsubsection*{A brief introduction to the Gallai-Edmonds decomposition}

Let $G'=(V',E')$ be a graph, and consider the following partition of its nodes (we use similar notations as in \cite{Sch03} in Section 24.4b): 
\begin{itemize}
    \item let $D'$ denote the set of nodes in $G'$ that are not covered by every maximum matching,
    \item let $A'$ denote the neighbors of $D'$ in $G'$,
    \item finally let $C'$ denote the rest of the nodes in $V'$.
\end{itemize}
A graph is \textbf{factor-critical} if for every node, the deletion of that node gives a graph that has a perfect matching. 
Components in $D'$ are factor-critical, and components in $C'$ are even (\cite{Sch03} Theorem 24.7 and Corollary 24.7a).

\begin{prop}\label{prop:GE properties}
    
The following properties hold for a maximum matching $M'$:
\begin{enumerate}[i)]
    \item Every node $v\in A'$ is matched by $M'$, and $M'(v)\in D$.
    \item Every unmatched node in $G'$ is in $D'$.
    \item For every component $Z$ in $D'$ there is exactly one node not matched by $M'$ within $Z$, which we call the \textbf{root} of $Z$. There is an alternating path from the root to every node in $Z$ (because each component is factor-critical).
    \end{enumerate}
\end{prop}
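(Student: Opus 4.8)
The plan is to deduce the three properties from what the excerpt already supplies about the Gallai--Edmonds partition of $G'$, namely the definitions of $D',A',C'$, the quoted facts that the components of $D'$ are factor-critical (hence odd) and those of $C'$ are even (Schrijver \cite{Sch03}, Thm.~24.7 and Cor.~24.7a), together with Berge's augmenting-path characterization of maximum matchings. I would fix the maximum matching $M'$ once and for all. Property \emph{ii)} is then immediate: if a node $v$ is missed by $M'$, then it is missed by \emph{some} maximum matching, so by the definition of $D'$ we have $v\in D'$.

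For the core assertions, the first part of \emph{i)} and the uniqueness half of \emph{iii)}, I would run a parity count on the components of $G'-A'$. First note that there is no edge between $D'$ and $C'$: a $C'$-node adjacent to $D'$ would by definition belong to $A'$. Hence the components of $G'-A'$ are exactly the factor-critical (odd) components $D_1,\dots,D_k$ of $D'$ together with the even components of $C'$. Writing $a=|A'|$, an $M'$-edge meeting $A'$ covers at most one node outside $A'$, so at most $a$ nodes of $D'\cup C'$ are matched across into $A'$; since each $D_i$ is odd, it must contain at least one node that is either exposed or matched across, and summing over $i$ shows the number of $M'$-exposed nodes is at least $k-a$. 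Conversely the Tutte--Berge formula, with $A'$ taken as the extremal set guaranteed by the structure theorem, gives deficiency exactly $k-a$. The resulting tightness forces every inequality above to be an equality, which yields simultaneously that no node of $A'$ is exposed (first part of \emph{i)}), that every node of $A'$ is in fact matched into $D'$, i.e.\ $M'(v)\in D'$ (second part of \emph{i)}), and that each $D_i$ contains exactly one node not matched to another node of $D_i$, which is the root of \emph{iii)}.

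For the alternating-path claim in \emph{iii)} I would invoke factor-criticality directly. The edges of $M'$ inside a component $Z=D_i$ form a near-perfect matching of $Z$ missing precisely the root $r$. Given any other node $u\in Z$, the factor-critical graph $Z$ admits a near-perfect matching missing $u$; taking its symmetric difference with the internal matching induced by $M'$ produces an alternating component whose branch through $r$ is exactly the required $M'$-alternating path from $r$ to $u$.

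The main obstacle is the converse deficiency bound, that a maximum matching of $G'$ has deficiency exactly $k-a$ (equivalently, that $A'$ can be saturated into pairwise distinct components of $D'$). This is the one genuinely non-bookkeeping step: it is the substantive Gallai--Edmonds content, and I would either cite it from the structure theorem in \cite{Sch03} or, for a self-contained argument, verify the Hall condition for a system of distinct representatives matching each node of $A'$ to a component $D_i$ meeting its neighborhood; everything else is a parity count and a symmetric-difference argument.
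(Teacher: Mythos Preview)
The paper does not actually prove this proposition: it is stated without proof as part of the ``brief introduction to the Gallai--Edmonds decomposition,'' with a blanket pointer to Schrijver~\cite{Sch03} for details. Your derivation is correct and is the standard way to extract these three properties from the structure theorem; in particular you correctly isolate the one non-trivial input---that $A'$ attains the Tutte--Berge bound (equivalently, that $A'$ can be matched into distinct components of $D'$)---and everything else is indeed parity bookkeeping and a symmetric-difference argument. Since the paper simply cites the result, there is no alternative approach to compare against; your write-up would serve as a perfectly adequate expansion of what the paper leaves implicit.
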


\subsubsection{Dual witness}

Assume that $M$ is popular in $G$ and consider the Gallai-Edmonds decomposition of $G_M^*$: 
\begin{itemize}
    \item let $D$ denote the set of nodes in $G_M^*$ that are not covered by every maximum matching,
    \item let $A$ denote the neighbors of $D$ in $G_M^*$,
    \item finally let $C$ denote the rest of the nodes in $V^*$.
\end{itemize}

Since $M$ is popular, it is also maximum in $G_M^*$, thus all unmatched nodes (i.e. blocking nodes, star nodes and $u$) are in $D$. In addition, for every blocking node $b_v$ and node $v \in A$. Thus $b_v$ forms a $1$-element component of $D$.

Let $X$ denote the set of nodes in $V^*$ reachable on an alternating path from blocking nodes or star nodes.

\begin{corollary}\label{prop:X props}
The following properties hold for $X$:
\begin{enumerate}[a)]
    \item $u\notin X$ (otherwise $M$ is not maximum in $G_M^*$),
    \item if a matched node $v \in X$, then $M(v) \in X$,
    \item $X\cap C = \emptyset$ (from i) of Proposition \ref{prop:GE properties}),
    \item if a component $Z$ of $D$ intersects $X$, then $Z\subset X$ (from ii) of Proposition \ref{prop:GE properties}).   
    \item there is no edge between $X\cap D$ and $V^*\setminus X$ (also from ii) of Proposition \ref{prop:GE properties}),
\end{enumerate}
\end{corollary}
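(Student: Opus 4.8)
The plan is to establish a)--e) one by one, each time combining three ingredients: the definition of $X$ as the set of nodes of $G_M^*$ reachable by an $M$-alternating path from an $M$-exposed node (these exposed nodes being exactly the blocking nodes, the star nodes, and $u$), the structure recorded in Proposition~\ref{prop:GE properties}, and the fact that $M$ is a \emph{maximum} matching of $G_M^*$, which holds by Theorem~\ref{thm:G*}. Observe at the outset that every blocking node and every star node is itself in $X$ (via the length-$0$ path) and, being unmatched, lies in $D$ by part ii) of Proposition~\ref{prop:GE properties}.

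Parts a) and b) do not need the full Gallai--Edmonds machinery. For a), if $u\in X$ then there is an $M$-alternating path joining a blocking or star node to $u$; its two endpoints are distinct $M$-exposed nodes, so this path is $M$-augmenting, contradicting the maximality of $M$ in $G_M^*$. For b), let $P$ be an $M$-alternating path from a blocking or star node to a matched node $v$. If the last edge of $P$ lies in $M$, its penultimate node is $M(v)$, and the prefix of $P$ ending there witnesses $M(v)\in X$. If the last edge of $P$ is not in $M$, then, since the first edge of $P$ is also not in $M$ (it starts at an exposed node), a short parity argument shows $M(v)\notin V(P)$, so $P$ followed by the matching edge $vM(v)$ is an $M$-alternating path to $M(v)$, hence $M(v)\in X$.

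For c), d), e) the Gallai--Edmonds structure is used. For c) I would prove the stronger statement that an $M$-alternating path starting at an $M$-exposed node stays inside $D\cup A$: an exposed node lies in $D$; a node of $D$ reached along the path is followed (across a non-matching edge) by a neighbour of a node of $D$, hence by a node of $A\cup D$; and a node of $A$ reached next along the path is matched into $D$ by part i) of Proposition~\ref{prop:GE properties}, so the following node is again in $D$ (the Gallai--Edmonds structure rules out the one remaining case, a node of $D$ whose matching edge leaves its component, along a path emanating from an exposed node of that component). Since $C$ has no neighbour in $D$, such a path never reaches $C$, so $X\cap C=\emptyset$. For d), if a component $Z$ of $D$ meets $X$, fix $v\in Z\cap X$; as $Z$ is factor-critical, $M$ restricts to a near-perfect matching of $Z$, and (by part iii) of Proposition~\ref{prop:GE properties} together with standard properties of factor-critical graphs) every node $z\in Z$ is joined to $v$ by an $M$-alternating path lying inside $Z$; splicing this with an $M$-alternating path from a blocking or star node to $v$ exhibits $z\in X$, so $Z\subseteq X$. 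Finally, e) follows from d) and b): if $v\in X\cap D$ had a neighbour $w\notin X$, then $w\in A\cup D$ since $v\in D$; if $w\in D$ it lies in the same component of $D$ as $v$, so $w\in X$ by d); and if $w\in A$, then, using part i) of Proposition~\ref{prop:GE properties} together with b), membership in $X$ propagates along the edge $vw$ from $v$ to $w$ --- in either case contradicting $w\notin X$.

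The step I expect to be the real obstacle is the splicing in d) (and, implicitly, the case $w\in A$ of e)): concatenating two $M$-alternating paths that both terminate at $v$ need not yield an $M$-alternating walk at the junction, so one must choose these paths with care --- e.g.\ so that they reach $v$ through the matching edge at $v$, which the even-length alternating paths inside a factor-critical component do --- and then verify that the object obtained still contains an $M$-alternating path from a blocking or star node to the target node. This is precisely the blossom-type bookkeeping underlying the Gallai--Edmonds decomposition: routine, but it has to be carried out explicitly.
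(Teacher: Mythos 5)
Your proposal is correct and follows the route the paper intends: the paper states this corollary without a written proof, offering only the parenthetical pointers to Proposition~\ref{prop:GE properties} and to the maximality of $M$ in $G_M^*$, and your argument is exactly the standard Gallai--Edmonds reasoning those pointers gesture at (a) via the augmenting-path characterization, b) by extending or truncating the alternating path at $v$, c) by the outer/inner invariant that even positions lie in $D$ and odd positions in $D\cup A$, d) via factor-criticality, e) as a consequence of b) and d)).

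One remark on the step you flag as the obstacle in d): splicing at an arbitrary $v\in Z\cap X$ is indeed awkward, both for parity and because the external path and the internal path may share nodes of $Z$ other than $v$. The clean fix is to splice at the \emph{root} $r$ of $Z$ instead: your own analysis for c) shows that the first node of $Z$ visited by an alternating path from an exposed node is $r$ (either $r$ is the exposed starting node, or the path enters $Z$ along the matching edge from $A$ into $r$), so the external path meets $Z$ only in $r$ and arrives there through a matching edge; Proposition~\ref{prop:GE properties}~iii) then supplies, for each $z\in Z$, an alternating path from $r$ to $z$ inside $Z$ beginning with a non-matching edge, and the concatenation is automatically an alternating path. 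The same even-alternating-path observation at $v$ disposes of the $w\in A$ case of e) without any further propagation argument. With that adjustment your proof is complete.
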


Now we give a dual witness $(\alpha, y)$ for the popularity of $M$ based on the structure above.  
First we define $y$. Consider the (odd) components $Z_1,\ldots,Z_k$ in  $G_M^*[X\cap D]$ that have size at least 3. 
If $Z_i\subset V$, we define $y_{Z_i}:=2$.
If $Z_i\not\subset V$, then the root of $Z_i$ is a star node $b_S$. Let $Z_i':=Z_i\setminus \{b_S\}\cup \{q\}$, where $q$ is the middle node of star $S$. Since $b_q$ is a $1$-degree unmatched node, $b_q\in D$ and  $q \in A$, so $q \notin Z_i$ thus set $Z_i'$ is also odd. We define $y_{Z_i'}:=2$.
For all other odd sets in $\mathbb{B}(G)$ we set $y$ to $0$.

Next we define $\alpha$.
For every $v\in D\cap X\cap V$ we define $\alpha_v:=-1$ and for every $v \in A \cap X$ we set $\alpha_v:=1$ (note that $A\subset V(M) \subset V$). 
Finally for every node $v \in V\setminus X$ we define $\alpha_v:=0$.

The definition of $(\alpha,y)$ is complete. 
We turn to the proof of correctness of $(\alpha,y)$, which gives a proof for Theorem \ref{thm:0-2}. 

\begin{proof}[Proof of Theorem \ref{thm:0-2}]


\begin{claim}\label{cl:subpart}
The collection of $2$-valued odd sets form a sub-partition.    
\end{claim}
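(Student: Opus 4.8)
The plan is to show that the $2$-valued odd sets produced by the construction are pairwise disjoint, i.e. they form a sub-partition of $V$. By construction, the $2$-valued sets are exactly the sets $Z_i$ (when $Z_i\subset V$) or $Z_i'$ (when $Z_i\not\subset V$), where $Z_1,\ldots,Z_k$ are the components of size at least $3$ of $G_M^*[X\cap D]$. Since distinct components of a graph are vertex-disjoint, the sets $Z_1,\ldots,Z_k$ themselves are pairwise disjoint; the only thing that could go wrong is that, when we replace a star node $b_S$ in $Z_i$ by the middle node $q$ of the star $S$ to form $Z_i'$, this newly added node $q$ might already lie in some other $Z_j$ or $Z_j'$, destroying disjointness.

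So the heart of the argument is to control where the middle node $q$ can go. First I would observe that each component $Z_i$ with $Z_i\not\subset V$ contains exactly one node outside $V$, namely its root, which is a star node $b_S$: indeed the only nodes of $G_M^*$ outside $V$ are $u$, the blocking nodes $b_v$, and the star nodes; by Corollary~\ref{prop:X props}a) we have $u\notin X$, each blocking node forms a $1$-element component of $D$ (so it cannot sit inside a component of size $\geq 3$), and by Proposition~\ref{prop:GE properties}iii) a component of $D$ has a unique unmatched node as its root — the star nodes and blocking nodes being the only unmatched candidates available inside $X\cap D$ besides $u$. Hence if $Z_i\not\subset V$ its root is a star node $b_S$, and $Z_i'=Z_i\setminus\{b_S\}\cup\{q\}$ differs from $Z_i$ only in this one swapped vertex, so $Z_i'\subset V$. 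Next I would argue $q\notin Z_j$ for every $j$: the middle node $q$ has the adjacent blocking node $b_q$ which is a $1$-degree unmatched vertex, so $b_q\in D$ and therefore $q\in A$; since $A\cap D=\emptyset$, $q$ lies in no component of $D$ at all, in particular in no $Z_j$. Thus replacing $b_S$ by $q$ introduces a vertex that was in none of the original components.

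It remains to rule out that two different stars share the same middle node or that the middle node of one star equals a leaf appearing in another component — but both are handled by the same observation: if $q$ is the middle node of a star then $q\in A$, hence $q$ lies in no $Z_j$, and $q$ can be the middle node of at most one star (a middle node is defined per star, and a node that is a middle node of a star is not a leaf of any star, so it is not the ``$q$'' of any other $Z_i'$). Concretely: suppose $q\in Z_i'\cap Z_j'$ with $i\neq j$. If $q$ also lies in $Z_i\cap Z_j$ then $Z_i,Z_j$ intersect, contradicting that they are distinct components. Otherwise $q$ is the swapped-in middle node for at least one of them, say for $Z_i'$; but then $q\in A$, so $q\notin Z_j$, forcing $q$ to also be the swapped-in middle node for $Z_j'$, i.e. $q$ is the middle node of two distinct stars $S\neq S'$ with roots $b_S\in Z_i$, $b_{S'}\in Z_j$ — impossible, since a node is the middle node of at most one star. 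The same dichotomy also covers $Z_i\cap Z_j'$ and $Z_i'\cap Z_j$ pairs. I do not expect a genuine obstacle here; the only point that requires a moment's care is the bookkeeping that ensures the substituted node $q$ is ``fresh'' with respect to all other sets, which is exactly what $q\in A$ (equivalently $q\notin D$) delivers.
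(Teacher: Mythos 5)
Your proof is correct and follows essentially the same route as the paper's: the components of $G_M^*[X\cap D]$ are pairwise disjoint, and the only possible conflict comes from the swapped-in middle nodes $q$, which is resolved by noting that $q\in A$ (so $q$ lies in no component of $D$) and that each middle node belongs to at most one star, whose star node $b_S$ lies in at most one component. Your write-up is just a more detailed bookkeeping of the same argument.
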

\begin{proof}
The collection of $2$-valued sets is derived from the odd components of $D$, which forms a sub-partition of $V^*$. Only middle nodes of stars may be added to these sets, hence it is enough to check these middle nodes. For every star $S$ there is at most one odd component in $D$ including $b_S$, thus there is at most one $2$-valued set including the middle node of $S$, which proves the claim.    
\end{proof}

\begin{lemma}\label{lem:witness (alpha,y)}
Vector $(\alpha,y)$ is a dual witness for the popularity of $M$ in LP2. 
\end{lemma}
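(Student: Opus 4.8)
The plan is to verify that $(\alpha,y)$ as constructed satisfies (1) dual feasibility for LP2 and (2) complementary slackness with $\widetilde M$ (equivalently, that the dual objective value equals $0 = w_M(\widetilde M)$), which together certify optimality and hence that $(\alpha,y)$ is a dual witness. Since $\widetilde M$ is a $0$-weight perfect matching in $\widetilde G$, LP1 has optimum $0$ by Theorem~\ref{thm:char-popular-mwpm}, so it suffices to exhibit a feasible dual solution of value $0$.

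First I would compute the dual objective value. Since each $2$-valued odd set $Z$ contributes $(|Z|-1)/2 \cdot 2 = |Z|-1$, and the $\alpha$ values are $-1$ on $D\cap X\cap V$, $+1$ on $A\cap X$, and $0$ elsewhere, the objective is $|A\cap X| - |D\cap X\cap V| + \sum_i (|Z_i'|-1)$. The key counting fact is that the odd components $Z_i$ of $G_M^*[X\cap D]$ of size $\ge 3$, together with the singleton components (blocking nodes $b_v$, and possibly $u$ — but $u\notin X$ by Corollary~\ref{prop:X props}a), partition $X\cap D$; by Proposition~\ref{prop:GE properties} each $Z_i$ has exactly one unmatched root, while the $+1$'s on $A\cap X$ are matched into $D\cap X$ with partners forming a perfect matching on $A\cap X \cup (\text{their partners})$. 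I would track the bijection between $A\cap X$ and its $M$-partners inside $D\cap X$, and the replacement of a star root $b_S$ by the middle node $q\in A\cap X$ in passing from $Z_i$ to $Z_i'$, to show the objective telescopes to $0$. The singleton blocking-node components contribute nothing to $y$ but their single node $b_v$ lies in $D\cap X\setminus V$, so it is not counted by $\alpha$ either — this needs care since $\alpha$ is only defined on $V$, not on $V^*$.

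Next, dual feasibility: I must check $\alpha_v + \alpha_w + \sum_{v,w\in Z} y_Z \ge w_M(vw)$ for every edge $vw\in E(G)$, and $\alpha_v \ge w_M(vv)$ for every $v\in V$. The loop constraints are easy: $w_M(vv)\in\{-1,0\}$ and $\alpha_v\ge -1$ always. For edges, split by $w_M(vw)\in\{-2,0,2\}$. The case $w_M(vw)=-2$ is trivial since the left side is $\ge -2$. The case $w_M(vw)=0$ (edge present in $G_M$) requires $\alpha_v+\alpha_w+(\text{odd-set term})\ge 0$; the only danger is when both endpoints get $-1$, i.e. both lie in $D\cap X\cap V$ with no common $2$-valued odd set covering them — here I would use that an edge inside $G_M^*$ between two nodes of $X\cap D$ must lie within a single component $Z$ of $G_M^*[X\cap D]$ by Corollary~\ref{prop:X props}e, and that component (or its $Z_i'$ modification) is a $2$-valued set, contributing $+2$. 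I also need the edge still to exist in $G_M^*$ — deleted blocking edges and the contraction at $u$ must be handled: if $vw$ is a blocking edge, $w_M(vw)=2$, treated below; the contraction only removes unmatched nodes, which are in $D$ but handled via the $u$/blocking-node analysis. The case $w_M(vw)=2$, i.e. $vw$ a blocking edge: then at least one of $v,w$ is a blocking node target or star leaf in $G_M^*$; I would argue that the relevant endpoint lies in $A\cap X$ (since $b_v$ or $b_S$ is an unmatched node in $D$, its neighbor is in $A$, and it is reachable from a blocking/star node trivially, so in $X$), giving $\alpha=+1$ there, and the other endpoint has $\alpha\ge -1$, and show a $2$-valued odd set absorbs the remaining deficit — or more carefully, that both endpoints of a blocking edge land in $A\cap X$, giving $1+1=2$ directly. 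This last claim — that \emph{both} endpoints of every blocking edge are in $A\cap X$ — is the crux and the main obstacle: it requires showing every node incident to a blocking edge is a neighbor of some unmatched $b$-node (immediate, since $b(v)$ is defined and adjacent to $v$) and that $v$ is reachable on an $M$-alternating path from a blocking/star node (e.g. via the length-one path $b(v)-v$, but one must check $v\in A$ so this path is alternating of the right type, i.e. $vb(v)\notin M$, which holds as $b(v)$ is unmatched).

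I expect the bookkeeping in the objective-value computation — reconciling $\alpha$ living on $V$ while the Gallai–Edmonds structure lives on $V^*$, and correctly accounting for star-node-to-middle-node substitutions and singleton blocking components — to be the most error-prone part, while the conceptual heart is the blocking-edge feasibility case. I would therefore first prove cleanly the structural lemma ``every node incident to a blocking edge in $G$ lies in $A\cap X$'' (hence gets $\alpha=+1$), which simultaneously resolves the $w_M=2$ constraints and clarifies the $u$-case, and only then do the arithmetic for the objective, organizing it component-by-component over the odd components of $G_M^*[X\cap D]$ so that each $Z_i$ contributes $-( |Z_i\cap V|) + |Z_i'|-1$ and each matched pair $(a,M(a))$ with $a\in A\cap X$ contributes $+1-1=0$ after matching $M(a)$ against its $-1$, leaving a telescoping total of $0$.
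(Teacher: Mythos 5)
Your overall strategy (dual feasibility plus a zero objective value, verified against the zero-weight primal solution $\widetilde M$) is exactly the paper's, and your component-by-component accounting of the objective is sound. However, the step you yourself single out as the crux rests on a false statement: it is \emph{not} true that every node incident to a blocking edge lies in $A\cap X$. A leaf node $y$ of a star $S$ can lie in $D$: the star node $b_S$ has degree at least $2$, so the argument that forces the unique neighbor of a degree-one unmatched node $b_v$ into $A$ does not apply to it, and indeed in Figure \ref{fig:a)} the leaves $a,b$ of the star with middle node $c$ form, together with $b_S$, an odd component of $D$, so $\alpha_a=\alpha_b=-1$. For such a blocking edge $xy$ ($x$ the middle node, $y$ a leaf) the constraint cannot be met by $\alpha$ alone; one needs the modified odd set $Z'=Z\setminus\{b_S\}\cup\{x\}$, which contains both $x$ and $y$ and contributes $y_{Z'}=2$, giving $2+1+(-1)=2$. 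You mention this fallback in passing, but then commit to ``first prove cleanly'' the false structural lemma; the proof must instead go through the odd-set route, which is precisely why the sets $Z_i'$ were defined to include the middle node.

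Two smaller gaps. For $0$-weight edges you declare the only danger to be two $-1$ endpoints, but the pair $(0,-1)$ also violates the constraint, since a $0$-valued node lies outside $X$ and hence outside every $2$-valued odd set; one must show, as the paper does via Corollary \ref{prop:X props}, that no $0$-weight edge joins a node of $X\cap D$ to a node with $\alpha=0$ (treating separately the case where the $0$-valued endpoint is unmatched and hence contracted into $u$, which would put an edge from $X\cap D$ to $u$). And for the loop constraints, observing $\alpha_v\ge -1$ does not suffice when $w_M(vv)=0$: you additionally need that every unmatched node receives $\alpha_v=0$, which holds because such nodes are contracted into $u\notin X$ and so fall into the $V\setminus X$ case of the definition of $\alpha$.
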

\begin{proof}
First we consider dual constraints corresponding to nodes (loops). 
All $\alpha_v\geq -1$, so it is enough to check nodes $v$ with $w_M(v,v)=0$. These are exactly the unmatched nodes, which all get $\alpha$ value 0.

Now we turn to dual constraints corresponding to edges.
For $-2$-weight edges the constraint is trivially met. 
\begin{claim}\label{cl:0-edges}
The dual constraint is met by $(\alpha, y)$ for $0$-weight edges.
\end{claim}
\begin{proof}
Let $vw$ be such an edge.
If both $v$ and $w$ have $\alpha$ value at least $0$, or have values $+1$ and $-1$, then the dual constraint is clearly fulfilled.

If $v$ and $w$ both have $\alpha$ value $-1$, then both are in $X\cap D$, so $v$ and $w$ are in the same odd component $Z$ of $G_M^*[X\cap D]$. Then edge $vw$ is in a $2$-valued odd set, and the dual constraint is fulfilled.

Finally we show that values $0$ and $-1$ are not possible. Assume $\alpha_v=0$ and $\alpha_w=-1$. If $v \in V^*$ were true, then edge $vw$ would connect a node in $X\cap D$ with a node in $V^*\setminus X$, a contradiction (Corollary \ref{prop:X props}).
If $v \notin V^*$ were true, then $v$ would be a node not matched by $M$, so there would be an edge $wu$ in $G_M^*$, giving a contradiction from the corollary again.
\end{proof}

\begin{claim}\label{cl:2-edges}
The dual constraint is met by $(\alpha, y)$ for $2$-weight (blocking) edges.
\end{claim}
\begin{proof}
Let $xy$ be a blocking edge. If it is not a leaf in a star, then blocking nodes $b_x$ and $b_y$ exist, and $x,y\in A$, thus $\alpha_x+\alpha_y=1+1=2$.

If $xy$ is a leaf of a star $S$ with middle node $x$ and leaf node $y$,
similarly we get that $\alpha_x=1$. If $\alpha_y=1$ also holds, then the dual constraint is fulfilled. 
Since $b_S\in D$, if $\alpha_y\neq 1$, then $\alpha_y=-1$ and $y\in D$. Thus $b_S$ and $y$ are in an odd component $Z$ of $G_M^*[X\cap D]$, and $Z'=Z\setminus \{b_S\}\cup \{x\}$ is a 2-valued odd set, and we get for the dual constraint of edge $xy$ that $y_{Z'}+\alpha_x+\alpha_y=2+1+(-1)=2$. 
\end{proof}

\begin{claim}\label{cl:total}
Dual solution $(\alpha, y)$ is optimal.
\end{claim}
\begin{proof}
We show that $(\alpha, y)$ is a $0$-weight dual solution, which proves its optimality (because matching $M$ gives a weight $0$ solution for LP). 
We sum up values of the objective function along matching edges. 
For every $v \in X\cap A$ we have $M(v)\in D$ (by Proposition \ref{prop:GE properties}), thus $\alpha_v+\alpha_{M(v)}=1+(-1)=0$.
For every $2$-valued odd set $Z$ there are $(|Z|-1)/2$ matching edges spanned by $Z$ and their endpoints have a total $\alpha$ value of $-(|Z|-1)$, which gives $y_Z(|Z|-1)/2+\sum_{vw\in M, v,w \in Z} (\alpha_v+\alpha_w)=2 \cdot (|Z|-1)/2-2 \cdot (|Z|-1)/2=0$.
All other values of $(\alpha,y)$ are zero, thus the total value is zero indeed.
\end{proof}

We have showed that $(\alpha, y)$ is a 0-weight dual solution, so it is a witness for popularity, and Lemma \ref{lem:witness (alpha,y)} is complete.
\end{proof}

Since witness  $(\alpha, y)$ defined above uses values $\boldsymbol{\alpha}\in\{0,\pm 1\}^n$ and $\boldsymbol{y}\in \{0,2\}^{|\mathbb{B}|}$, and 2-valued sets form a sub-partition by Claim \ref{cl:subpart}, this proves Theorem \ref{thm:0-2}.

\end{proof}

The following observation will be used in the next subsection to give a combinatorial characterization of fractional popularity.

\begin{observation}\label{obs:2-valued}
Let $(\alpha,y)$ be the optimal solution of LP2 derived from the Gallai-Edmonds decomposition as described above. Then there is a $2$-valued odd set if and only if there is a component $Z$ in $X\cap D$ with $|Z|\geq 3$.    
\end{observation}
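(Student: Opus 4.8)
The plan is to unwind the definition of the dual witness $(\alpha,y)$ and show directly that the existence of a $2$-valued odd set is logically equivalent to the existence of a large component in $X\cap D$. By construction in Subsection~\ref{subsec:witness from GE}, the $2$-valued odd sets are manufactured in exactly one way: from the odd components $Z_1,\ldots,Z_k$ of $G_M^*[X\cap D]$ that have size at least $3$. Every other odd set of $\mathbb{B}(G)$ receives $y$-value $0$ by fiat. So the only thing to check is that this manufacturing process genuinely produces a set for each such component, and produces nothing when there is no such component.

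For the ``if'' direction, suppose there is a component $Z$ of $G_M^*[X\cap D]$ with $|Z|\ge 3$. I would split into the two cases used in the construction. If $Z\subset V$, then $y_Z:=2$ was set, so a $2$-valued odd set exists. If $Z\not\subset V$, the construction argues the root of $Z$ must be a star node $b_S$ (since blocking nodes $b_v$ form $1$-element components, being $1$-degree and adjacent only to $v\in A$), and then replaces $b_S$ by the middle node $q$ of $S$ to get $Z':=Z\setminus\{b_S\}\cup\{q\}$, with $y_{Z'}:=2$. Here I would note the small parity remark already made in the text: since $b_q\in D$ and hence $q\in A$, we have $q\notin Z$, so $|Z'|=|Z|\ge 3$ is still odd, and $Z'$ is a genuine $2$-valued odd set. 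Either way a $2$-valued odd set exists.

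For the ``only if'' direction, suppose there is a $2$-valued odd set $W$. Since every odd set not arising from the construction was assigned $y$-value $0$, $W$ must be one of the sets produced by the construction, i.e.\ $W=Z_i$ with $Z_i\subset V$, or $W=Z_i'$ with $Z_i\not\subset V$. In the first case $W=Z_i$ is itself a component of $G_M^*[X\cap D]$ of size $\ge 3$. In the second case $Z_i$ is such a component and $|Z_i|=|W|\ge 3$. In both cases there is a component $Z$ in $X\cap D$ with $|Z|\ge 3$, as required. This finishes the equivalence.

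I do not expect any serious obstacle here: the statement is essentially a restatement of the construction, and the only substantive point — that the $2$-valued sets are in bijection with the big components of $X\cap D$ and that swapping a star node for its middle node preserves oddness and size — has already been verified in the construction and in Claim~\ref{cl:subpart}. The one place to be slightly careful is to make explicit that distinct big components give distinct $2$-valued sets and, conversely, that no spurious $2$-valued set can appear; but both follow immediately because $y$ is defined componentwise and is $0$ everywhere else. So the proof is a short bookkeeping argument citing the definition of $(\alpha,y)$ and Proposition~\ref{prop:GE properties}.
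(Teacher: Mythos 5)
Your proposal is correct and follows exactly the reasoning the paper intends: the observation is stated without proof precisely because, as you note, $y$ is defined to be $2$ only on the sets manufactured from the components of $G_M^*[X\cap D]$ of size at least $3$ (with a star node possibly swapped for its middle node, preserving oddness and size) and $0$ everywhere else. Your bookkeeping in both directions matches the construction, so there is nothing to add.
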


\subsection{Combinatorial characterization of fractional popularity}\label{subsec:frac}

In this subsection we show an application of Theorem \ref{thm:0-2} for fractional matchings.
The notion of popularity can be generalized naturally to non-integer vectors in the following way. A \textbf{fractional matching} in $\widetilde{G}$ is a vector $p:E(\widetilde{G})\to \mathbb{R}^+$ such that $d_p^{\widetilde{G} }(v)=1$ for every $v\in V$. 
Every matching $M$ can be regarded as a $0-1$ valued fractional matching $\chi_{\widetilde{M}}$.  
A  matching $M$ is \textbf{fractional popular}\footnote{also called 'truly popular' in \cite{Kav19}} if there is no fractional matching $p$ such that $\sum_{e\in E(\widetilde{G})} w_M(e) \cdot p_e>0 $.
Kavitha \cite{Kav19} investigated fractional popular matchings and gave the following characterization.

\begin{theorem}[Kavitha \cite{Kav19}, Theorem 8.]\label{thm:Kav}
Let a popular roommates instance be given on graph $G$. A matching $M$ is fractional popular iff $M$ has a witness $(\alpha, y)$ for \emph{LP2} such that $\alpha \in \{0, \pm 1\}^{V}$ and $y=0$.
\end{theorem}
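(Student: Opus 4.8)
The plan is to recognize Theorem~\ref{thm:Kav} as the ``blossom-free'' specialization of the duality underlying Theorem~\ref{thm:char-popular-mwpm}, and to supply the integrality of the witness from the Gallai--Edmonds machinery already developed. Let $\mathrm{LP1}'$ denote the program obtained from $\mathrm{LP1}$ by \emph{dropping} the odd-set constraints $i^G_x(Z)\le(|Z|-1)/2$, keeping only $d^{\widetilde G}_x(v)=1$ and $x\ge 0$; this is exactly the maximum-weight fractional perfect matching program in $\widetilde G$. Since $\chi_{\widetilde M}$ is a feasible point of value $0$, the optimum of $\mathrm{LP1}'$ is always $\ge 0$, and by definition $M$ is fractional popular if and only if this optimum equals $0$. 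The dual of $\mathrm{LP1}'$ is precisely $\mathrm{LP2}$ with the variables $y$ deleted, i.e. $\mathrm{LP2}$ restricted to $y=0$; call it $\mathrm{LP2}'$. Thus the theorem amounts to: $\mathrm{opt}(\mathrm{LP1}')=0$ if and only if $\mathrm{LP2}'$ has a feasible solution of value $0$ with entries in $\{0,\pm1\}$.

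For the direction ($\Leftarrow$) I would argue by weak duality. Given a witness $(\alpha,0)$ with $\alpha\in\{0,\pm1\}^V$, optimality for $\mathrm{LP2}$ forces $\sum_{v}\alpha_v=\mathrm{opt}(\mathrm{LP2})=0$. For any fractional matching $p$, summing the edge constraints $w_M(vw)\le\alpha_v+\alpha_w$ and the loop constraints $w_M(vv)\le\alpha_v$ weighted by $p$ and regrouping by the degree equations gives $\sum_e w_M(e)p_e\le\sum_v\alpha_v\,d^{\widetilde G}_p(v)=\sum_v\alpha_v=0$. Hence no fractional matching beats $M$, so $M$ is fractional popular. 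This step is a routine one-line computation once the grouping is set up.

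For the direction ($\Rightarrow$) I would reuse the structured witness. Since a fractional popular matching is in particular popular, Theorem~\ref{thm:0-2} produces a witness $(\alpha,y)$ with $\alpha\in\{0,\pm1\}^V$ and $y\in\{0,2\}^{|\mathbb{B}|}$; it suffices to show that fractional popularity forces $y=0$. By Observation~\ref{obs:2-valued} this is equivalent to showing that $G_M^*[X\cap D]$ has no component $Z$ with $|Z|\ge3$. I would prove the contrapositive: from such a $Z$ I construct a fractional matching of strictly positive $w_M$-weight. The component $Z$ is odd and factor-critical, and because $Z\subseteq X$ it is reachable from a blocking node $b_v$ (or a star node $b_S$) along an $M$-alternating path in $G_M^*$; translating the artificial first edge back to $G$ exhibits, in $G$, an $M$-alternating path that starts at a genuine blocking edge $e_0$ (weight $+2$) and runs into $Z$, together with a blossom (odd cycle) inside the factor-critical $Z$. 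The fractional matching $p$ is $\chi_{\widetilde M}$ everywhere except that it performs the usual $0/1$ augmentation along the path and places value $\tfrac12$ on the edges of the odd cycle; the half-integrality of the blossom absorbs the parity so that every node keeps degree exactly $1$ and no loop on a matched node is activated. Every edge used lies in $G_M$ and thus has weight $0$, except the single blocking edge $e_0$, so $\sum_e w_M(e)p_e=+2>0$, contradicting fractional popularity. Therefore $y=0$ and $(\alpha,0)$ is the required witness.

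The main obstacle is the last construction: one must route the $0/1$ path augmentation and the $\tfrac12$-blossom so that they meet consistently at the root of $Z$ and leave a valid fractional perfect matching of $\widetilde G$ while touching exactly one positive-weight (blocking) edge and no penalizing loop. The delicate point is the parity bookkeeping where the alternating path enters the factor-critical component, and the verification that no matched node is left uncovered (which would activate a $-1$ loop and could cancel the $+2$ gain). A cleaner but less self-contained alternative for ($\Rightarrow$) is to invoke strong duality for $\mathrm{LP1}'/\mathrm{LP2}'$ directly and then argue, from the half-integrality of the fractional matching polytope together with $w_M\in\{-2,0,2\}$ and $w_M(vv)\in\{-1,0\}$, that an optimal $\mathrm{LP2}'$-solution can be taken in $\{0,\pm1\}^V$; the structured-witness route above is preferable here since it reuses Theorem~\ref{thm:0-2} and Observation~\ref{obs:2-valued} verbatim.
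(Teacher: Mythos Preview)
The paper does not prove Theorem~\ref{thm:Kav}; it is quoted verbatim from Kavitha~\cite{Kav19} and used as a black box in the $ii)\Rightarrow i)$ step of Theorem~\ref{thm:char-trula-popular}. So there is no in-paper proof to compare against. That said, your proposal essentially \emph{derives} Kavitha's theorem from the paper's own machinery (Theorem~\ref{thm:0-2} and Observation~\ref{obs:2-valued}), which would make the paper self-contained; and the construction you sketch for $(\Rightarrow)$ is precisely the concatenation of the implications $iii)\Rightarrow ii)$ and $i)\Rightarrow iii)$ in the paper's proof of Theorem~\ref{thm:char-trula-popular}, neither of which invokes Theorem~\ref{thm:Kav}. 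So the overall plan is sound and non-circular, and $(\Leftarrow)$ via weak duality is correct as written.

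There is, however, a concrete error in your $(\Rightarrow)$ accounting. When you augment along the path starting at the blocking edge $e_0=xv_1$, the node $M(x)$ \emph{is} left uncovered and must take its loop, of weight $-1$ (since $M(x)\in V(M)$). Thus $\sum_e w_M(e)p_e=+1$, not $+2$; this is exactly the computation the paper carries out in case~b) of the $i)\Rightarrow iii)$ proof. Your closing worry (``verify that no matched node is left uncovered'') is therefore the wrong target: one matched node is always uncovered, but $2-1>0$ still yields the contradiction. You also need to treat separately the case where the root of $Z$ is a star node $b_S$ (so there is no incoming path to augment). Here the right move, as in case~a) of the paper, is to put $\tfrac12$ on the odd cycle through the middle node $x$ of $S$; two blocking star-edges contribute $\tfrac12\cdot 2$ each and the loop at $M(x)$ contributes $-1$, again giving total $+1$. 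With these two fixes your argument goes through.
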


We use our results to give a combinatorial characterization of fractional popular matchings.

\begin{theorem}\label{thm:char-trula-popular}
Let a popular roommates instance be given on graph $G$ and let $M$ be a popular matching. Let sets $X,D$ be the ones derived from the Gallai-Edmonds decomposition of $G_M^*$ as described in Subsection \ref{subsec:witness from GE}.
Then the followings are equivalent:
\begin{enumerate}[i)]
    \item $M$ is also fractional popular,
    \item there is no odd component $Z$ in $G_M^*[X\cap D]$ with $|Z|\geq 3$,
    \item the following structures do not exist in $G$:
 \begin{enumerate}[a)]
    \item an odd alternating cycle $C$ through a star: $C=x-v_1-M(v_1)-\ldots-v_k-M(v_k)-x$, where $xv_1$ and $xM(v_k)$ are leafs of a star.
   \item an alternating path $P$ connecting a blocking edge $xv_1$ with the root $r$ of an odd alternating cycle $Q$: $P=x-v_1-M(v_1)-\ldots-v_k-M(v_k)=r$, and $Q=r-z_1-M(z_1)-\ldots-z_l-M(z_l)-r$, where $V(P) \cap V(Q)=\{r\}$,
 \end{enumerate}

\end{enumerate}
 
\end{theorem}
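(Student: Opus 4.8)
The plan is to prove the three-way equivalence by establishing $(i)\Leftrightarrow(ii)$ first, using the machinery already developed, and then $(ii)\Leftrightarrow(iii)$ by translating the structural condition on $G_M^*$ back into alternating structures in $G$.

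For $(i)\Leftrightarrow(ii)$, I would combine Theorem~\ref{thm:Kav} with Observation~\ref{obs:2-valued}. For the direction $(ii)\Rightarrow(i)$: if there is no odd component $Z$ in $G_M^*[X\cap D]$ with $|Z|\geq 3$, then by Observation~\ref{obs:2-valued} the dual witness $(\alpha,y)$ constructed in Subsection~\ref{subsec:witness from GE} has $y=0$, and since $\alpha\in\{0,\pm1\}^V$ by Theorem~\ref{thm:0-2}, Theorem~\ref{thm:Kav} gives that $M$ is fractional popular. The converse $(i)\Rightarrow(ii)$ is the direction that needs a little care: a priori, fractional popularity only guarantees the \emph{existence} of \emph{some} witness with $y=0$, not that the \emph{particular} Gallai--Edmonds witness has $y=0$. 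So I would argue as follows. Suppose for contradiction that there is an odd component $Z$ of $G_M^*[X\cap D]$ with $|Z|\geq 3$; by the construction this yields a $2$-valued odd set in the witness. I would then exhibit a fractional matching $p$ in $\widetilde{G}$ of positive $w_M$-weight directly: the odd component $Z$ is factor-critical with an alternating path from its root (a blocking or star node) to every vertex, and such a factor-critical component with an attached blocking edge is exactly the kind of odd alternating structure that supports a half-integral perturbation $p = \chi_{\widetilde M} + \tfrac12(\text{something})$ with positive weight. Concretely, the blocking edge contributes $+2$ to the weight while going around the odd cycle in the factor-critical component contributes $0$ net (the votes cancel in pairs along matched/unmatched alternation), and the loops absorbed lose only $O(1)$ — one has to check the bookkeeping so the net is strictly positive and all degree constraints stay $=1$. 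This shows $M$ is not fractional popular, contradicting $(i)$.

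For $(ii)\Leftrightarrow(iii)$, I would unwind the definitions of $X$, $D$ and the blocking/star nodes. An odd component $Z$ of $G_M^*[X\cap D]$ with $|Z|\geq 3$ is, by Proposition~\ref{prop:GE properties}(iii), factor-critical with a root $r$ that is an unmatched node of $G_M^*$, i.e.\ a blocking node $b_v$, a star node $b_S$, or $u$; but by Corollary~\ref{prop:X props}(a) we have $u\notin X$, so $r\in\{b_v, b_S\}$. Moreover $Z\subseteq X$ means every node of $Z$ is reachable by an alternating path from a blocking or star node. I would split into the two cases mirroring (iii)(a) and (iii)(b):
\begin{itemize}
\item If the root of $Z$ is a star node $b_S$: after deleting $b_S$, the factor-critical structure on $Z\setminus\{b_S\}$ together with the two (or more) star edges from $b_S$ to leaf nodes translates, via the contraction/expansion between $G$ and $G_M^*$, into an odd alternating cycle through the star $S$ in $G$ — this is exactly structure (iii)(a). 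The odd cycle length comes from $|Z|$ being odd.
\item If the root of $Z$ is a blocking node $b_v$: then $b_v$ is attached to a single vertex $v$, and the factor-critical component beyond $v$ must itself contain an odd cycle (since $Z\setminus\{b_v\}$ spanned in $G_M^*$ is even-sized but factor-criticality forces an odd ear); moreover the reachability $Z\subseteq X$ from a blocking/star node, combined with the alternating path from $v$ into this odd cycle, is precisely an alternating path from the blocking edge $vb_v$ (i.e.\ the blocking edge $v v'$ in $G$) to the root of an odd alternating cycle $Q$, with the path and cycle meeting only at that root — structure (iii)(b). I need the vertex-disjointness $V(P)\cap V(Q)=\{r\}$, which follows because $Z$ is a single factor-critical block and the Gallai--Edmonds path structure is laminar enough to pick a canonical $r$.
\end{itemize}
Conversely, given either structure in $G$, I would reverse the translation: the star-cycle (iii)(a) maps to a factor-critical odd component rooted at $b_S$ inside $X\cap D$, and the path-plus-odd-cycle (iii)(b) maps to a factor-critical odd component rooted at $b_v$, in both cases of size $\geq 3$; membership in $X$ is witnessed by the alternating path starting from the star/blocking node, and membership in $D$ follows because these nodes are not covered by every maximum matching of $G_M^*$ (one can reroute along the odd structure to miss any chosen vertex). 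This gives $(iii)\Rightarrow(ii)$ and closes the cycle of implications.

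The main obstacle I anticipate is the $(i)\Rightarrow(ii)$ step — specifically, constructing the explicit positive-weight fractional matching from a large odd component and verifying it is feasible (all node-degrees equal $1$, including loops) with \emph{strictly} positive $w_M$-weight, handling the edge cases $k$ small and the distinction between the blocking-node root and star-node root. A secondary, more bureaucratic obstacle is getting the correspondence between alternating structures in $G$ and in $G_M^*$ exactly right across steps 1--4 of the $G_M^*$ construction (the added blocking/star nodes, the contraction of unmatched nodes to $u$, and the deletion of blocking edges), particularly ensuring parities work out so that "odd component in $G_M^*$" really does correspond to "\emph{odd} alternating cycle in $G$" and that the vertex-disjointness condition in (iii)(b) is faithfully preserved in both directions.
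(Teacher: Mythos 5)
Your overall strategy matches the paper's: the direction $(ii)\Rightarrow(i)$ via Observation~\ref{obs:2-valued} and Theorem~\ref{thm:Kav} is exactly what the paper does, and your other two steps are the paper's remaining implications in disguise. The paper avoids your anticipated ``main obstacle'' by arranging the proof as a cycle $ii)\Rightarrow i)\Rightarrow iii)\Rightarrow ii)$: the implication $i)\Rightarrow iii)$ is proved by writing down the explicit half-integral fractional matchings for structures (a) and (b) (weight $\tfrac12(2+0+\cdots+0+2)-1=1$ for (a), and $2+0+\cdots+0+(-1)+\tfrac12\cdot 0=1$ for (b)), and $iii)\Rightarrow ii)$ converts a large odd component into one of these structures. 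Composing these two contrapositives gives precisely your direct construction for $(i)\Rightarrow(ii)$, so nothing beyond the paper's two steps is needed, and your extra converse direction of $(ii)\Leftrightarrow(iii)$ is redundant.

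There is, however, a concrete error in your case analysis for translating an odd component $Z$ of $G_M^*[X\cap D]$ with $|Z|\geq 3$ into a structure of type (iii)(b). You split on whether the root of $Z$ is a star node $b_S$ or a blocking node $b_v$. The second case is vacuous: a blocking node $b_v$ has the single neighbor $v$, which therefore lies in $A$, so $b_v$ always forms a one-element component of $D$ and can never be the root of a component of size $\geq 3$ (the paper states this explicitly when setting up the decomposition). The correct dichotomy is: either the root of $Z$ is unmatched by $M$, in which case it must be a star node and one gets structure (a) by replacing $b_S$ with the middle node of $S$; or the root $r$ is a node of $V$ matched by $M$ (to a node of $A$ outside $Z$), and then $Z\subseteq X$ supplies a shortest alternating path $P_0$ from a blocking or star node $b$ to $r$, entering $Z$ from outside; replacing $b$ by the other endpoint $x$ of the corresponding blocking edge (with $x\notin V(P_0)$ by minimality) and taking $Q$ to be the odd alternating cycle through $r$ guaranteed by factor-criticality yields structure (b). Your reliance on ``$Z\setminus\{b_v\}$ forces an odd ear'' is not needed and does not hold as stated; the path $P$ in structure (b) comes from the reachability defining $X$, not from inside $Z$.
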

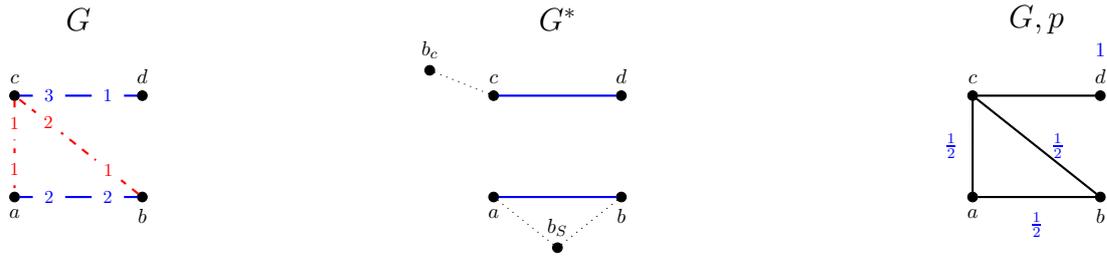
\begin{figure}[ht]
\begin{tikzpicture}[scale=0.7, transform shape]

  \pgfmathsetmacro{\d}{2.4}	
 \pgfmathsetmacro{\D}{9}
	
    \node[vertex, label=below:$a$] (A1) at (0,0) {};
	\node[vertex, label=below:$b$] (A2) at ($(A1) + (\d, 0)$) {};
	\node[vertex, label=above:$c$] (A3) at ($(A1) + (0, 0.8*\d)$) {};
	\node[vertex, label=above:$d$] (A4) at ($(A1) + (\d, 0.8*\d)$) {};

 \node[] (G) at (0.5*\d, 1.4*\d) {\LARGE{$G$}};

	\draw [thick, color=blue] (A1) -- node[edgelabel, near start] {2} node[edgelabel, near end] {2} (A2);
	\draw [thick, dashed, color=red] (A2) -- node[edgelabel, near start] {1} node[edgelabel, near end] {2} (A3);
	\draw [thick,  dashed, color=red] (A3) -- node[edgelabel, near start] {1} node[edgelabel, near end] {1} (A1);  
    \draw [thick, color=blue] (A3) -- node[edgelabel, near start] {3} node[edgelabel, near end] {1} (A4);

    \node[vertex, label=below:$a$] (A1) at (\D,0) {};
	\node[vertex, label=below:$b$] (A2) at ($(A1) + (\d, 0)$) {};
	\node[vertex, label=above:$c$] (A3) at ($(A1) + (0, 0.8*\d)$) {};
	\node[vertex, label=above:$d$] (A4) at ($(A1) + (\d, 0.8*\d)$) {};
\node[vertex, label=above:$b_c$] (bC) at ($(A1) + (-0.5*\d, \d)$) {};
\node[vertex, label=above:$b_S$] (bS) at ($(A1) + (0.5*\d, -0.4*\d)$) {};

    \node[] (G*) at ($(A1) + (0.5*\d, 1.4*\d)$) {\LARGE{$G_M^*$}};

	\draw [thick, color=blue] (A1) --  (A2);
    \draw [thick, color=blue] (A3) --  (A4);
    \draw [dotted] (A3) --  (bC);
    \draw [dotted] (A1) --  (bS);
    \draw [dotted] (A2) --  (bS);

    \node[vertex, label=below:$a$] (A1) at (\D*2,0) {};
	\node[vertex, label=below:$b$] (A2) at ($(A1) + (\d, 0)$) {};
	\node[vertex, label=above:$c$] (A3) at ($(A1) + (0, 0.8*\d)$) {};
	\node[vertex, label=above:$d$] (A4) at ($(A1) + (\d, 0.8*\d)$) {};

    \node[ label=above:\textcolor{blue}{$1$}] (p(d) at ($(A1) + (\d, \d)$) {};
	\draw [thick] (A1) --  node[ label=below:\textcolor{blue}{$\frac{1}{2}$}] {} (A2);
	\draw [thick] (A2) --  node[ label=right:\textcolor{blue}{$\frac{1}{2}$}] {} (A3);
	\draw [thick] (A3) --  node[ label=left:\textcolor{blue}{$\frac{1}{2}$}] {} (A1);  
    \draw [thick] (A3) --  (A4);

 \node[] (G) at ($(A1)+(0.5*\d, 1.4*\d)$) {\LARGE{$G, p$}};

\end{tikzpicture}
\caption{An example for the existence of structure a) in Theorem \ref{thm:char-trula-popular}. For matching $M=\{ab,cd\}$, nodes $\{a-b-b_S\}$ form an odd component in $X\cap D$ and an odd alternating cycle as well.
Thus there is an odd alternating cycle $C=c-a-b-c$ through a star in $G$ and a fractional matching $p$ more popular than $M$ exists.}\label{fig:a)}
\end{figure}

\begin{figure}[ht]
\begin{tikzpicture}[scale=0.7, transform shape]

  \pgfmathsetmacro{\d}{2.4}	
 \pgfmathsetmacro{\D}{8}
	
    \node[vertex, label=below:$a$] (A1) at (0,0) {};
	\node[vertex, label=below:$b$] (A2) at ($(A1) + (\d, 0)$) {};
	\node[vertex, label=above:$c$] (A3) at ($(A1) + (0.5*\d, 0.8*\d)$) {};

	\draw [thick, color=blue] (A1) -- node[edgelabel, near start] {2} node[edgelabel, near end] {1} (A2);
	\draw [thick, dotted] (A2) -- node[edgelabel, near start] {2} node[edgelabel, near end] {1} (A3);
	\draw [thick,  dashed, color=red] (A3) -- node[edgelabel, near start] {2} node[edgelabel, near end] {1} (A1);

    \node[] (G) at (1.15*\d, 1.3*\d) {\LARGE{$G$}};

     \node[vertex, label=below:$e$] (B1) at ($(A1) + (\d*1.3, 0)$) {};
	\node[vertex, label=below:$f$] (B2) at ($(B1) + (\d, 0)$) {};
	\node[vertex, label=above:$d$] (B3) at ($(B1) + (0.5*\d, 0.8*\d)$) {};

	\draw [thick, color=blue] (B1) -- node[edgelabel, near start] {2} node[edgelabel, near end] {2} (B2);
	\draw [thick, dotted] (B2) -- node[edgelabel, near start] {1} node[edgelabel, near end] {3} (B3);
	\draw [thick,  dotted] (B3) -- node[edgelabel, near start] {2} node[edgelabel, near end] {1} (B1);

    \draw [thick, color=blue] (B3) -- node[edgelabel, near start] {1} node[edgelabel, near end] {3} (A3);

    \node[vertex, label=below:$a$] (A1) at (\D*1.1,0) {};
	\node[vertex, label=below:$b$] (A2) at ($(A1) + (\d, 0)$) {};
	\node[vertex, label=above:$c$] (A3) at ($(A1) + (0.5*\d, 0.8*\d)$) {};
	\node[vertex, label=above:$d$] (A4) at ($(A1) + (1.8*\d, 0.8*\d)$) {};
     \node[vertex, label=below:$e$] (B1) at ($(A1) + (\d*1.3, 0)$) {};
	\node[vertex, label=below:$f$] (B2) at ($(B1) + (\d, 0)$) {};
\node[vertex, label=above:$b_c$] (bC) at ($(A1) + (-0.1*\d, \d)$) {};
\node[vertex, label=above:$b_a$] (bA) at ($(A1) + (-0.5*\d, -0.4*\d)$) {};

    \node[] (G*) at ($(A1) + (1.15*\d, 1.4*\d)$) {\LARGE{$G_M^*$}};

	\draw [thick, color=blue] (A1) --  (A2);
    \draw [thick, color=blue] (A3) --  (A4);
    \draw [thick, color=blue] (B1) --  (B2);
    \draw [dotted] (A3) --  (bC);
    \draw [dotted] (A3) --  (bC);
    \draw [dotted] (A2) --  (A3);
    \draw [dotted] (A4) --  (B1);
    \draw [dotted] (A4) --  (B2);
    \draw [dotted] (A1) --  (bA);

  \node[vertex, label=below:$a$] (A1) at (2*\D,0) {};
	\node[vertex, label=below:$b$] (A2) at ($(A1) + (\d, 0)$) {};
	\node[ label=below:\textcolor{blue}{$1$}] (A21) at ($(A1) + (\d, -0.2*\d)$) {};
	\node[vertex, label=above:$c$] (A3) at ($(A1) + (0.5*\d, 0.8*\d)$) {};

	\draw [thick] (A1) --  (A2);
	\draw [thick] (A2) --  (A3);
	\draw [thick] (A3) --  node[ label=left:\textcolor{blue}{$1$}] {}  (A1);

    \node[] (G) at ($(A1)+(1.15*\d, 1.3*\d)$) {\LARGE{$G, p$}};

     \node[vertex, label=below:$e$] (B1) at ($(A1) + (\d*1.3, 0)$) {};
	\node[vertex, label=below:$f$] (B2) at ($(B1) + (\d, 0)$) {};
	\node[vertex, label=above:$d$] (B3) at ($(B1) + (0.5*\d, 0.8*\d)$) {};

	\draw [thick] (B1) --  node[ label=below:\textcolor{blue}{$\frac{1}{2}$}] {} (B2);
	\draw [thick] (B2) --   node[ label=right:\textcolor{blue}{$\frac{1}{2}$}] {} (B3);
	\draw [thick] (B3) --   node[ label=left:\textcolor{blue}{$\frac{1}{2}$}] {} (B1);

    \draw [thick] (B3) --  (A3);

\end{tikzpicture}
\caption{An example for the existence of structure b) in Theorem \ref{thm:char-trula-popular}. For matching $M=\{ab,cd, ef\}$, alternating odd cycle $\{d-e-f-d\}$ can be reached on an alternating path from both $b_c$ and $b_a$, but the path from $b_c$ is shorter. 
Thus there is an odd alternating path $P=a-c-d$ to cycle $Q=d-e-f-d$ in $G$, and a fractional matching $p$ more poplar than $M$ exists.}\label{fig:b)}
\end{figure}
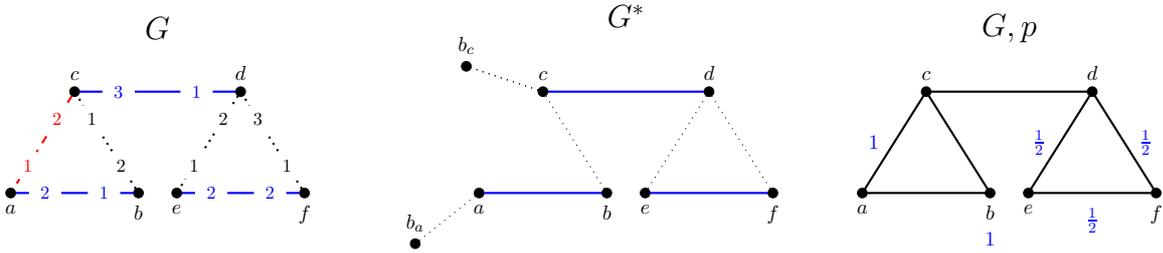

\begin{proof}[Proof of Theorem \ref{thm:char-trula-popular}]
First we prove $ii) \to i)$:  Consider the Gallai-Edmonds decomposition of $G_M^*$.    
If there is no odd component $Z$ in $G_M^*[X\cap D]$ with $|Z|\geq 3$, then by Observation \ref{obs:2-valued} there is no 2-valued odd set in witness $(\alpha,y)$ described in Subsection \ref{subsec:witness from GE}. Since $\alpha_v \in \{0,\pm 1\}$, by Theorem \ref{thm:Kav} $M$ is truly popular indeed.

Second we prove $i) \to iii)$ indirectly: assume that one of the structures a) or b) described above exists.

If an alternating cycle $C$ as described in a) exists, consider the fractional matching $p$ derived from $M$, where we replace values on cycle $C$ by constant $\frac{1}{2}$ and define $p$ to be $1$ on loop $M(x)M(x)$.
Then  $\sum_{e\in E(\widetilde{G})} w_M(e) \cdot p_e = \frac{1}{2}(2+0+\ldots+0+2)-1=1>0$ that is, $M$ is not fractional popular (see Figure \ref{fig:a)}).

If an alternating path $P$ as described in b) exists, consider the following fractional matching $p$. We switch $M$ along $P$: 
$1=p(xv_1)=p(M(v_1)v_2)=\ldots=p(M(v_{k-1})v_k)$, and $0=p(xM(x))=p(v_iM(v_i))$, and set $p$ on loop $M(x)M(x)$ to 1, while $p$ on the edges of cycle $Q$ is set to $\frac{1}{2}$. 
On all other edges $p:=\chi_{\widetilde M}$.
Then  $\sum_{e\in E(\widetilde{G})} w_M(e) \cdot p_e = 2 +0 + \ldots + 0 + (-1) + \frac{1}{2} \cdot 0=1>0$ that is, $M$ is not fractional popular  (see Figure \ref{fig:b)}).

Finally we prove $iii)\to ii)$ indirectly: let $Z$ be an odd component of size at least $3$ in $G_M^*[X\cap D]$. Since $Z$ is factor critical (Corollary 24.7a. \cite{Sch03}), there is an odd $M$-alternating cycle $C_0$ through root $r$ in $Z$: $C_0=r-v_1-M(v_1)-\ldots-v_k-M(v_k)-r$.
If the root of $Z$ is not matched by $M$, then it is a star node $b_S$ for star $S$ with middle node $x$.  Then $C:=C_0\setminus\{b_S\}\cup\{x\}:$ $C=x-v_1-M(v_1)-\ldots-v_k-M(v_k)-x$ is the alternating cycle as described in a).

If the root $r$ of $Z$ is matched by $M$, since it is in $X$, it can be reached from a star node or a blocking node $b$ on an alternating path $P_0=b-v_1-M(v_1)-\ldots-v_k-M(v_k)=r$. We may assume $P_0$ is a shortest such path.
Since $bv_1$ is an edge in $G_M^*$, there is a blocking edge $xv_1$ incident to $v_1$. Since $P_0$ is shortest, $x\notin V(P_0)$. Then $P:=x-v_1-M(v_1)-\ldots-v_k-M(v_k)=r$ and $Q:=C_0$ give a structure as in b).

\end{proof}

\section*{Open questions}

We have seen that the popularity of a matching $M$ is equivalent to the maximality of $M$ in an auxiliary graph, which motivates several questions. 
Can this approach be used to give new algorithms for finding popular matchings, or can previous methods be simplified?
Can we prove similar results for other popular structures, for example popular $b$-matchings, or popular matchings under weak preferences?

\subsection*{Acknowledgement}

The authors would like to thank Ágnes Cseh for the inspiring discussions and her valuable comments on the paper. 

The second author was in part supported by the project no. K128611, provided by the National Research, Development and Innovation Fund of Hungary. 
The first author is supported by the János Bolyai Research Scholarship of the Hungarian Academy of Science and 
by the ÚNKP-23-5-ELTE-319 New National Excellence Program of the Ministry for Culture and Innovation from the source of the National Research, Development and Innovation Fund.
This research has been implemented with the support provided by the Ministry of Innovation and Technology of Hungary from the National Research, Development and Innovation Fund, financed under the  ELTE TKP 2021-NKTA-62 funding scheme.
\bibliography{mybib}

\end{document}